\newcommand*\samethanks[1][\value{footnote}]{\footnotemark[#1]}
\newcommand{\qedclaim}{$\diamond$}
\newcommand{\wtm}{\leq_{\sf wtp}}
\newcommand{\ewidth}{\mathbf{p}_{\lambda^{\sf e}}}
\newcommand{\lwidth}{\mathbf{p}_{\delta^{\sf ce}}}
\newcommand{\name}[1]{\textsc{#1}}
\newcommand{\pbdefn}[3]{
\begin{tabbing}
\name{#1}\\
\emph{Input:} \hspace{1.2cm} \= \parbox[t]{14cm}{#2} \\
\emph{Question:}             \> \parbox[t]{14cm}{#3} \\
\end{tabbing}
}
\newcommand{\edeg}{{\sf edeg}}
\newcommand{\vdeg}{{\sf vdeg}}
\newcommand{\obs}{{\sf obs}}
\newcommand{\tw}{{\bf tw}}
\newcommand{\pw}{{\bf pw}}
\newcommand{\cw}{{\bf cw}}
\newcommand{\T}{{\sf T}}
\newcommand{\p}{{\bf p}}
\newcommand{\N}{\Bbb{N}}
\newcommand{\hh}{\end{document}}
\renewcommand{\O}{{\cal O}}
\newcommand{\remove}[1]{}
\newcommand{\NP}{{\sf NP}\xspace}
\newcommand{\FPT}{{\sf FPT}\xspace}
\newcommand{\XP}{{\sf XP}\xspace}
\newcommand{\W}{W\xspace}
\renewcommand{\L}{{\sf L}}
\newcommand{\tcw}{{\bf  tcw}}
\newcommand{\etw}{{\bf  etw}}
\newcommand{\eqdef}{\stackrel{{\scriptsize\rm def}}{=}}
\definecolor{MidnightBlack}{rgb}{0.1,0.1,.34}
\definecolor{MidnightBlue}{rgb}{0.1,0.1,0.44}
\definecolor{Black}{rgb}{0,0, 0}
\definecolor{Blue}{rgb}{0, 0 ,1}
\definecolor{Red}{rgb}{1, 0 ,0}
\definecolor{White}{rgb}{1, 1, 1}
\definecolor{Grey}{rgb}{.6, .6, .6}
\definecolor{Mygreen}{rgb}{.0, .7, .0}
\definecolor{Yellow}{rgb}{.55,.55,0}
\definecolor{Mustard}{rgb}{1.0, 0.86, 0.35}
\definecolor{applegreen}{rgb}{0.55, 0.71, 0.0}
\definecolor{darkturquoise}{rgb}{0.0, 0.81, 0.82}
\definecolor{celestialblue}{rgb}{0.29, 0.59, 0.82}
\definecolor{green_yellow}{rgb}{0.68, 1.0, 0.18}
\definecolor{crimsonglory}{rgb}{0.75, 0.0, 0.2}
\definecolor{darkmagenta}{rgb}{0.30, 0.0, 0.30}
\definecolor{internationalorange}{rgb}{1.0, 0.31, 0.0}
\definecolor{darkorange}{rgb}{1.0, 0.55, 0.0}
\definecolor{ao}{rgb}{0.0, 0.5, 0.0}
\definecolor{awesome}{rgb}{1.0, 0.13, 0.32}
\newcommand{\black}[1]{{\color{Black}#1}}
\newcommand{\blue}[1]{{\color{Blue}#1}}
\newcommand{\red}[1]{{\color{Red}#1}}
\newcommand{\green}[1]{{\color{Mygreen}#1}}
\newcounter{func}
\newcommand{\funref}[1]{\hyperref[#1]{f_{\ref*{#1}}}} 
\newcounter{con}
\newcommand{\conref}[1]{\hyperref[#1]{c_{\ref*{#1}}}} 
\newcommand{\mynewtheorem}[2]{
	\newaliascnt{#1}{dummy}
	\newtheorem{#1}[#1]{#2}
	\aliascntresetthe{#1}
}
\theoremstyle{plain}
\theoremstyle{definition}
\theoremstyle{remark}
\newtheoremstyle{caja1}
  {\topsep}
  {\topsep}
  {\itshape}
  {}
  {}
  {}
  {.5em}
  {\blue{\fbox{\black{\thmname{#1}~\thmnumber{#2}\@ifempty{#3}{.}{}\thmnote{ (#3).}}}}}
\theoremstyle{caja1}
\newtheoremstyle{caja2}
  {\topsep}
  {\topsep}
  {\itshape}
  {}
  {}
  {}
  {.5em}
  {\green{\fbox{\black{\thmname{#1}~\thmnumber{#2}\@ifempty{#3}{.}{}\thmnote{ (#3).}}}}}
\theoremstyle{caja2}
\newtheoremstyle{caja3}
  {\topsep}
  {\topsep}
  {\itshape}
  {}
  {}
  {}
  {.5em}
  {\red{\fbox{\black{\thmname{#1}~\thmnumber{#2}\@ifempty{#3}{.}{}\thmnote{ (#3).}}}}}
\theoremstyle{caja3}
\providecommand{\customgenericname}{}
\begin{document}

\title{\bf Edge-treewidth: Algorithmic and\\  combinatorial  properties\thanks{The second and the last author were supported  by   the ANR projects DEMOGRAPH (ANR-16-CE40-0028), ESIGMA (ANR-17-CE23-0010), and the French-German Collaboration ANR/DFG Project UTMA (ANR-20-CE92-0027). Emails: \texttt{sedthilk@thilikos.info}}}
\author{Loïc Magne\thanks{École normale supérieure,  de l'Université Paris-Saclay, France.} \and Christophe Paul\thanks{LIRMM, Univ Montpellier, CNRS, Montpellier, France.} \and Abhijat Sharma\thanks{IDA, AIICS, Linköping University, Sweden.} \and Dimitrios M. Thilikos\samethanks[3]}

\maketitle

\begin{abstract}
\noindent We introduce the graph theoretical parameter of edge treewidth. This parameter occurs in a natural way as the tree-like analogue of cutwidth or, alternatively, as an edge-analogue of treewidth. 
We study  the combinatorial properties of edge-treewidth. We first observe that edge-treewidth does not enjoy any closeness properties under the known partial ordering relations on graphs. We introduce a variant of the topological minor relation, namely, the {\em weak topological minor} relation and we prove that edge-treewidth is closed under weak topological minors. Based on this new relation we are able to provide universal 
obstructions for edge-treewidth. The proofs are based on the fact that  edge-treewidth of a graph is parametetrically equivalent with the maximum over the treewidth and the maximum degree of the blocks of the graph. 
We also prove that deciding whether the edge-treewidth  of a graph is at most $k$ is an {\sf NP}-complete problem.
\end{abstract}
\bigskip

\tableofcontents

\newpage
\section{Introduction}

A vibrant area of research in graph algorithms is dedicated to the study of structural  graph parameters and their algorithmic applications. Perhaps the most prominent graph parameter is treewidth (hereafter denoted by ${\tw}$).  The importance of  treewidth resides on both its combinatorial and algorithmic applications. From the algorithmic point of view, treewidth
has readily became important due to Courcelle's theorem~\cite{Courcelle90them}, asserting that 
every problem definable by some sentence $φ$ in Monadic Second Order Logic (MSOL) can be solved in time $\O_{\tw(G),|φ|}(|G|)$.\footnote{
Let ${\bf t}=(x_{1},\ldots,x_{l})∈ \N^l$ and $\chi,ψ: \N
\rightarrow \N.$
We say that $\chi(n)=\O_{{\bf t}}(ψ(n))$ if there exists a computable
function $φ:\N^{l} \rightarrow \N$
such that  $\chi(n)=\O(φ({\bf t})\cdot ψ(n)).$}
Using the terminology of parameterized complexity, this means that every MSOL-definable problem
admits a fixed parameter algorithm (in short, an {\sf FPT}-algorithm) when parameterized by treewidth.

Interestingly, there are  several problems where Courcelle's theorem does not apply. 
As mentioned in~\cite{GanianKS15algor}, problems  such as {\sc Capacitated Vertex Cover}, {\sc Capacitated Dominating Set},  {\sc List Coloring}, and {\sc Boolean CSP} 
are ${\sf W}[1]$-hard, when parameterized by $\tw$, which mens that an {\sf FTP}-algorithm may not be expected for them. The emerging question for such problems is whether they may admit an {\sf FTP}-algorithm  when parameterized by some alternative structural  parameter.  As many problems escaping the expressibility 
power of MSOL are defined using certificates that are edge sets, the investigation has been oriented to edge-analogues of treewidth.  The first candidate 
for this was the parameter of {\sl tree-cut width}, denoted by {\bf twc}, defined by Wollan in~\cite{Wollan15thest}.
Indeed tree-cut width enjoys combinatorial properties that parallel those of treewidth and, most importantly, 
it yelds {\sf FPT}-algorithms for several such problems including the above mentioned ones~\cite{GanianKS15algor}. Interestingly, this landscape appears to be more complicated when it comes 
to the {\sc Edge Disjoint Paths} problem. Ganian and Ordyniak, in~\cite{GanianO21ThePo}, proved 
that this problem is ${\sf W}[1]$-hard when parameterized by {\bf twc}. This means that some other, alternative to {\bf twc}, parameter should be considered for this problem. This paper was motivated by this question.
We give the definition of a different parameter, called  {\sl edge-treewidth} and denoted by \etw. As we see, edge-treewidtht is parametrically incomparable to tree-cut width and can be seen in a natural way as an ``edge-analogue'' of 
treewidth or as a ``tree-analogue'' of cutwidth, based on their layout definitions.
Moreover, it seems that it is the right choice as a parameter for the {\sc Edge Disjoint Paths} problem: this problem
admits an {\sf FPT}-algorithm when parameterized by \etw.

%

\paragraph{Some definitions on graphs and on graph parameters.} 
Before we proceed with  the definition of edge-treewidth and its relation to other parameters, we need some definitions.

All graphs that  we consider are finite, loop-less, and may have multiple edges. Given a graph $G$, we let $V(G)$ and $E(G)$ respectively denote its vertex set and edge set. As we permit multi-edges, $E(G)$ is a multiset and the {\em multiplicity} of each edge is the number of times that it appears in $E(G)$.
We use  $|G|=|V(G)|$ in order to denote the {\em size} of $G$.
 For a subset of vertices $S\subseteq V(G)$, we denote by  $G[S]$ the subgraph of $G$ induced by $S$. We also define $E_{G}(S)$ as the set of edges with one vertex in $S$ and one vertex not in $S$
and $N_{G}(S)$ as the set of all vertices not in $S$ that are adjacent to a vertex in $S$.
The {\em vertex-degree} (resp. {\em edge-degree}) of a vertex $v$ is defined as ${\sf vdeg}_{G}(v)=|N_{G}(\{v\})|$ (resp. ${\sf edeg}_{G}(v)=|E_{G}(\{v\})|$). We also set 
$\Delta_{\sf v}(v)=\max\{{\sf vdeg}_{G}(v)\mid v\in V(G)\}$ and 
$\Delta_{\sf e}(v)=\max\{{\sf edeg}_{G}(v)\mid v\in V(G)\}$.
Given a graph $G$, a set of vertices $S$ and a vertex $v\in S$, we define 
$C_{G}(S,v)$ as the vertex set of the connected component of $G[S]$ containing the vertex $v$.

\paragraph{Graph parameters.}
A {\em graph parameter} is any function mapping graphs to non-negative integers.
Let $\p_{1},\p_{2}$ be two graph parameters. We write $\p_{1}=\p_{2}$ if, for every graph $G$, $\p_{1}(G)=\p_{2}(G)$.
We say that  $\p_{1}$ is {\em upper bounded} by  $\p_{2}$, denoted by $\p_{1}\lesssim\p_{2}$,  if there is a 
function $f:\Bbb{N}\to\Bbb{N}$ such that for every 
graph $G$, $\p_{1}(G)\leq f(\p_{2}(G))$.  We also say that $\p_{1}$ and $\p_{2}$ are {\em parametrically equivalent}, denoted by $\p_{1}\sim \p_{2}$, if  $\p_{2}\lesssim \p_{1}$
and $\p_{1}\lesssim \p_{2}$. Also, we say that  $\p_{1}$ and $\p_{2}$ are {\em parametrically incomparable}, denoted by $\p_{1}\not\sim \p_{2}$, if  neither $\p_{2}\lesssim \p_{1}$
nor $\p_{1}\lesssim \p_{2}$. 

\vspace{-0.2cm}
\paragraph{Layouts.}
 A \emph{layout}  of an $n$-vertex graph $G$ is a linear ordering ${\sf L}=\langle x_{1},\ldots,x_{n}\rangle$ of its vertices.  
We  define $\mathcal{L}(G)$ as the set of layouts of $G$. For $i\in\{1,\dots,n\}$, we let ${{\sf L}(i)}$ denote the $i$-th vertex of ${\sf L}$. If clear from the context, ${\sf L}(i)$ may be denoted $x_i$. For two vertices $x$ and $y$, we write $x\prec_{{\sf L}} y$ if $x$ occurs before $y$ in ${\sf L}$.
Given ${\sf L}=\langle x_{1},\ldots,x_{n}\rangle\in {\cal L}(G)$ and  $i\in\{1,\ldots,n\}$, we define $S_{i}=\{x_i,\ldots,x_n\}$. Finally, if $S\subseteq V(G)$, we denote by ${{\sf L}[S]}$ the layout of $G[S]$ such that, for every $x,y\in S$, $x\prec_{{\sf L}[S]} y$ if and only if $x\prec_{{\sf L}} y$.

Given an $n$-vertex graph $G$ and a layout ${\sf L}=\langle x_{1},\ldots,x_{n}\rangle\in  \mathcal{L}(G)$, we call {\em $(G,{\sf L})$-cost function} every  
function $\delta_{G,{\sf L}}:[n]\to \Bbb{N}$   that assigns a non-negative integer to each position of the layout ${\sf L}$. From now on, for simplicity, we will use $\delta$ instead of the heavier notation $\delta_{G,{\sf L}}$.
Given a $(G,{\sf L})$-cost  function
$\delta$, we set ${\bf p}_{\delta}(G,{\sf L})=\max\{\delta(i)\mid i\in\{1,\ldots,n\}\}$
and we define the following general graph parameter:
$$ {\bf p}_{\delta}(G)=\min\{{\bf p}_{\delta}(G,{\sf L}) \mid {\sf L}\in \mathcal{L}(G)\}.$$
Four cost functions on layouts of graphs that are of particular interest are presented in~\autoref{costfusncexample}.

\vspace{-0.3cm}
\begin{figure}[h]
\begin{align*}
\delta^{\sf v}(i)  =  |N_{G}(S_{i})|         & & \delta^{\sf vc}(i) =  |N_{G}(C_{G}(S_i,x_{i}))|       \\
\delta^{\sf e}(i)  =  |E_{G}(S_{i})|         & & \red{\fbox{$δ^{\sf ec}(i)  =  |E_{G}(C_{G}(S_i,x_{i}))|$}}
\end{align*}
\caption{
The four cost functions $\delta^{\sf v}, \delta^{\sf vc}, \delta^{\sf e}$, and \red{\fbox{$δ^{\sf ec}$}}. It follows that ${\bf p}_{\delta^{\sf v}}={\bf pw}$, ${\bf p}_{\delta^{\sf vc}}={\bf tw}$, ${\bf p}_{\delta^{\sf e}}={\bf cw}$, and ${\bf p}_{δ^{\sf ec}}={\bf etw}$.}
\label{costfusncexample}
\end{figure}

\vspace{-0.2cm}
Notice that the cost functions in the left column are defined using the set $S_{i}$,  
while those in the right column are defined by the set $C_{G}(S_i,x_{i})$, that instead 
of $S_{i}$, takes into account the connected component of $G[S_{i}]$ that contains $x_{i}$.
On the other side, the first line defines the cost as the number of \textsl{vertices} adjacent to this  set, while the second line defines it by the number of \textsl{edges} incident to this set.\
See \autoref{costfuncexample} for an example of a layout of a graph and the corresponding cost functions.
\medskip

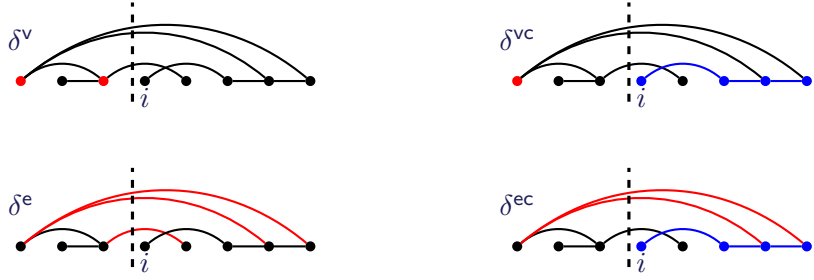
\begin{figure}[h]
\centering
\begin{center}
\begin{tikzpicture}[thick,scale=0.55]
\tikzstyle{sommet}=[circle, draw, fill=black, inner sep=0pt, minimum width=3pt]

\begin{scope}[xshift=0cm,yshift=0cm]
\draw[red] node[sommet,fill=red] (1) at (0,0){};
\draw node[sommet] (2) at (1,0){};
\draw[red] node[sommet,fill=red] (3) at (2,0){};
\draw node[sommet] (4) at (3,0){};
\draw node[sommet] (5) at (4,0){};
\draw node[sommet] (6) at (5,0){};
\draw node[sommet] (7) at (6,0){};
\draw node[sommet] (8) at (7,0){};

\draw[-] (2) to (3) ;
\draw[-] (6) to (7) ;
\draw[-] (7) to (8) ;
\draw[-] (1) to[bend left=40] (3) ;
\draw[-] (3) to[bend left=40] (5) ;
\draw[-] (4) to[bend left=40] (6) ;
\draw[-] (1) to[bend left=40] (7) ;
\draw[-] (1) to[bend left=40] (8) ;
\draw[-,dashed,very thick] (2.7,-0.5) to (2.7,1.9);

\node[] (a) at (3,-0.4) {$i$};
\node[] (a) at (0,1) {$\delta^{\sf v}$};
\end{scope}

\begin{scope}[xshift=12cm,yshift=0cm]
\draw[red] node[sommet,fill=red] (1) at (0,0){};
\draw node[sommet] (2) at (1,0){};
\draw[] node[sommet] (3) at (2,0){};
\draw[blue] node[sommet,fill=blue] (4) at (3,0){};
\draw node[sommet] (5) at (4,0){};
\draw[blue] node[sommet,fill=blue] (6) at (5,0){};
\draw[blue] node[sommet,fill=blue] (7) at (6,0){};
\draw[blue] node[sommet,fill=blue] (8) at (7,0){};

\draw[-] (2) to (3) ;
\draw[-,blue] (6) to (7) ;
\draw[-,blue] (7) to (8) ;
\draw[-] (1) to[bend left=40] (3) ;
\draw[-] (3) to[bend left=40] (5) ;
\draw[-,blue] (4) to[bend left=40] (6) ;
\draw[-] (1) to[bend left=40] (7) ;
\draw[-] (1) to[bend left=40] (8) ;
\draw[-,dashed,very thick] (2.7,-0.5) to (2.7,1.9);

\node[] (a) at (3,-0.4) {$i$};
\node[] (a) at (0,1) {$\delta^{\sf vc}$};
\end{scope}

\begin{scope}[yshift=-4cm]
\draw[] node[sommet] (1) at (0,0){};
\draw node[sommet] (2) at (1,0){};
\draw[] node[sommet] (3) at (2,0){};
\draw[] node[sommet] (4) at (3,0){};
\draw node[sommet] (5) at (4,0){};
\draw[] node[sommet] (6) at (5,0){};
\draw[] node[sommet] (7) at (6,0){};
\draw[] node[sommet] (8) at (7,0){};

\draw[-] (2) to (3) ;
\draw[-] (6) to (7) ;
\draw[-] (7) to (8) ;
\draw[-] (1) to[bend left=40] (3) ;
\draw[-,red] (3) to[bend left=40] (5) ;
\draw[-] (4) to[bend left=40] (6) ;
\draw[-,red] (1) to[bend left=40] (7) ;
\draw[-,red] (1) to[bend left=40] (8) ;
\draw[-,dashed,very thick] (2.7,-0.5) to (2.7,1.9);

\node[] (a) at (3,-0.4) {$i$};
\node[] (a) at (0,1) {$\delta^{\sf e}$};
\end{scope}

\begin{scope}[xshift=12cm,yshift=-4cm]
\draw[] node[sommet] (1) at (0,0){};
\draw node[sommet] (2) at (1,0){};
\draw[] node[sommet] (3) at (2,0){};
\draw[blue] node[sommet,fill=blue] (4) at (3,0){};
\draw node[sommet] (5) at (4,0){};
\draw[blue] node[sommet,fill=blue] (6) at (5,0){};
\draw[blue] node[sommet,fill=blue] (7) at (6,0){};
\draw[blue] node[sommet,fill=blue] (8) at (7,0){};

\draw[-] (2) to (3) ;
\draw[-,blue] (6) to (7) ;
\draw[-,blue] (7) to (8) ;
\draw[-] (1) to[bend left=40] (3) ;
\draw[-] (3) to[bend left=40] (5) ;
\draw[-,blue] (4) to[bend left=40] (6) ;
\draw[-,red] (1) to[bend left=40] (7) ;
\draw[-,red] (1) to[bend left=40] (8) ;
\draw[-,dashed,very thick] (2.7,-0.5) to (2.7,1.9);

\node[] (a) at (3,-0.4) {$i$};
\node[] (a) at (0,1) {$δ^{\sf ec}$};\end{scope}

\end{tikzpicture}
\end{center}
\caption{A layout of a 6 vertex graphs and the four cost functions $\delta^{v}$, $\delta^{\sf vc}$ (at the top), $\delta^{\sf e}$ and $δ^{\sf ec}$ at the bottom. In blue, we have the set $C_G(S_i,x_i)$. In red, we have the vertices of $N_{G}(S_{i})$, $N_{G}(C_{G}(S_i,x_{i}))$ and the edges of $E_{G}(S_{i})$, $E_{G}(C_{G}(S_i,x_{i}))$.
}
\label{costfuncexample}
\end{figure}

The \emph{treewidth} of a graph  measures the topological resemblance of a graph to the structure of a tree \cite{RoberstonS84GMIII,BerteleB73onno,Halin76sfun}. The \emph{pathwidth} of a graph, denoted by ${\bf pw}$, measures the topological resemblance of a graph to the structure of a path \cite{RobertsonS83GMI,EllisST94thev}. It holds that ${\bf p}_{\delta^{\sf v}}={\bf pw}$ \cite{Kinnersley92thev} and ${\bf p}_{\delta^{\sf vc}}={\tw}$ \cite{DendrisKT97fugi}. Finally, 
the parameter ${\bf p}_{\delta^{\sf e}}$ is known as the \emph{cutwidth} of a graph~\cite{Chung85onthe,HortonPB00onmi}, denoted ${\bf cw}$. From the layout characterization of pathwidth, we can consider the cutwidth as the  edge-analogue of pathwidth. Likewise, we may see ${\bf p}_{δ^{\sf ec}}$ as the edge-analogue of treewidth. We call ${\bf p}_{δ^{\sf ec}}$  \emph{edge-treewidth} and we denote it by ${\etw}$.
Interestingly, this parameter is, so far, completely non-investigated. The purpose of this note is to initiate  the study of edge-treewidth, as an ``edge-analogue'' of treewidth.

\medskip

As we already mentioned, an alternative edge-analogue of treewidth is {\em tree-cut width}.  The reason why tree-cut width can be seen as an edge-analogue of treewidth is that it is parametrically equivalent to the 
maximum size of a wall contained as an immersion in the same way that treewidth is parametrically equivalent to the maximum  
size of a wall contained as topological minor~\cite{Wollan15thest} (see \autoref{dadyboo}  for the formal definitions of these graph containment relations). 
Using the terminology that we introduce in \autoref{opertlsok}, the family of walls may serve as {\sl universal minor-obstruction} for 
treewidth as well as universal immersion-obstruction for tree-cut width.

All  four parameters ${\bf tw}, {\bf pw}, {\bf cw}, {\bf tcw}$
have been extensively studied both from the combinatorial and the algorithmic point of view \cite{GiannopoulouPRT17linea,Wollan15thest,Courcelle90themo,Bodlaender88dyna,Bodlaender96alin,Bodlaender98apa}. For all of them, the corresponding decision problem is  \NP{}-complete \cite{Wollan15thest,BodlaenderGHK95appr,ArnborgCP87comp,GareyJ79comp,BodlaenderT97treew}. Moreover, all of them   enjoy nice closeness properties under known 
partial ordering relations on graphs: treewidth and pathwidth are  minor-closed, while cutwidth and tree-cut  are  immersion-closed. These closeness relations imply that all of them
are fixed parameter tractable (in short, {\sf FPT}) when parameterized by their values. In other words, for every ${\bf p}\in\{{\bf tw}, {\bf pw}, {\bf cw}, {\bf tcw}\}$, there is an algorithm computing $\p(G)$   in  time $\O_{\p(G)}(|G|)$  \cite{BodlaenderK96effi,BodlaenderFT09deri,ThilikosSB05cutwI,ThilikosSB05acutwII,BodlaenderT98comp}. (For more on parameterized algorithms and complexity, see \cite{DowneyF99param,FlumG06param,CyganFK14param}.)

\vspace{-0.1cm}
\paragraph{Our results.}
Unfortunately, the above  combinatorial/algorithmic properties do not copy for $\etw$.
As we see in \autoref{dadyboo}, none of the above closeness properties holds for edge-treewidth.
In \autoref{dadyboo} we define a new partial ordering relation, namely the {\em weak topological minor}
relation 
and we show that edge-treewidth is closed under this relation. 
Let ${\cal G}_{k}=\{G\mid \etw(G)\leq k\}$, $k\in\Bbb{N}$.
We consider  the  {\sl weak-topological minor  obstruction set}
 of ${\cal G}_{k}$ that is defined as  the set $\obs({\cal G}_{k})$ containing all weak-topological minor minimal graphs that do not belong  in ${\cal G}_{k}$. In \autoref{uiower}, we identify the (finite) $\obs({\cal G}_{k})$ for $k\leq 2$ 
and {we show that, for $k=3$, $\obs({\cal G}_{k})$  is infinite. 
This comes in (negative) contrast to the fact that the corresponding obstructions sets for the graphs of treewidth/pathwidth/cutwidth at most $k$ is always finite because of the seminal results of Robertson and Semour in their Graph Minors series~\cite{RobertsonS04GMXX}. On the positive side, we prove that 
edge-treewidth admits a {\sl finite} ``long term'' obstruction characterization. As our main combinatorial  result (\autoref{paplriok}), 
we  give a set of  {\sl five} parameterized  
graph families that can act as {\sl universal obstructions} for edge-treewidth in the sense 
that edge-treewidth is parametrically equivalent to the maximum size of a graph in these families that 
is contained as weak-topological minor.

\medskip
The algorithmic motivation of our study of edge treewidth is the following problem.

\vspace{-0.2cm}
\pbdefn{Edge Disjoint Path problem (EDP)}{A graph $G$ and a set $(s_1,t_1),\dots,(s_k,t_k)$ of pairs of vertices, called terminals, of $G$.}{Does $G$ contains pairwise edge disjoint paths between every pair of terminals?}
\vspace{-3mm}

The vertex disjoint counterpart of {\sc EDP} is the {\sc Vertex Disjoint Path}  problem  ({\sc VDP}).
Both {\sc EDP} and {\sc VDP} are  known to be \NP{}-Complete~\cite{Karp75OnThe}. Both problems
are   {\sf FPT} when parameterized by the number $k$ of the terminals~\cite{RobertsonS95GMVIII}. Another relevant question is whether (and when)  {\sc VDP} and {\sc EDP}  remain \FPT{} when parameterized by certain graph parameters that are independent from the number of terminals.
In other words, the question is for which parameter $\p$ the above problems can be solved in time $|G|^{\O_{\p(G)}(1)}$ or, even better, in time $\O_{\p(G)}(|G|^{\O(1)})$.
In the former case, the corresponding parameterized problem is classified in the parameterized complexity class {\sf XP} and, as we already mentioned, in the later in the parameterized parameterized complexity class {\sf FPT}.

Naturally, the first parameter to consider is treewidth. It is known that {\sc VDP}, parameterized by  treewidth,
is in {\sf FPT} because of the  time  $\O_{\tw(G)}(|G|)$  algorithm of Scheffler in~\cite{Scheffler94apra}. However, this is not any more the case for {\sc EDP}
as it has been shown by Nishizeki, Vygen, and Zhou in~\cite{NishizekyVZ01TheEd} that the {\sc EDP} is \NP-hard even on graphs of treewidth two. Therefore, one may not expect that {\sc EDP},  parameterized by  treewidth,  is in \FPT{} or even in \XP{}. This induces the question on whether there is some alternative graph parameter ${\p}$ such that {\sc EDP} is {\sf FPT}, 
when parameterized by $\p$.  
The first parameter to be considered was tree-cut width, seen as an edge-analogue of treewidth. 
In this direction  Ganian and Ordyniak, in~\cite{GanianO21ThePo},  proved that {\sc EDP}, when parameterized by $\tcw,$ is in {\sf XP}, however, it is also $\W[1]$-hard, therefore we may not expect that a time $\O_{\tcw(G)}(|G|^{\O(1)})$ algorithm exists for this problem (i.e., an \FPT-algorithm). This means that another graph parameter should be considered so as to derive 
an {\sf FPT} algorithm for  {\sc EDP}. Also, as we next see,  \etw\ may play this role. Moreover, in the end of~\autoref{nioklop}, we show that edge-tree width is {parametrically  incomparable} to tree-cut width. \medskip

Given a graph $G$, a {\em block} of $G$ is either a bridge\footnote{An edge in a graph $G$ is a {\em bridge} of its removal increases the number of connected components.}  co of $G$ or a maximal 2-connected subgraph of $G$.
We denote by ${\sf bc}(G)$ as the set of all blocks of $G$.  In \autoref{nioklop}. we prove that 
  $\etw$  is parametrically equivalent to the graph parameter $\p,$ where
 \begin{eqnarray}
  \p(G)=\max\big\{\tw(B),\Delta_{\sf e}(B) \:|\: B \in {\sf bc}(G)\big\}\label{lkolpio}
\end{eqnarray}

It is not difficult to see that solving {\sc EDP} on a graph $G$ can be linearly reduced to the blocks of $G$.
Also by a simple reduction of   {\sc EDP}  to the  {\sc VDP},
 {\sc EDP} can be solved in time $\O_{\tw(G),\Delta_{\sf e}(G)}(|G|)$ (see e.g.,~\cite{GanianOR21OnStr}). These facts together imply a
 time  $\O_{\etw(G)}(|G|)$ algorithm for  {\sc EDP}. These two facts imply that {\sc EDP} is in \FPT\ when parameterized by  edge-treewidth. This indicates that edge-treewidth is the accurate edge analogue parameter to treewidth, when it comes to the study of {\sc EDP}  from the parameterized complexity point of view. As a byproduct of our results, it follows that an approximate estimation of \etw\ can be found in \FPT-time: there is an $\O_{k}(|G|)$  time  algorithm that, given a graph $G$ and an integer $k\in \Bbb{N}$,
outputs  either a report  that $\etw(G)>k$ or a layout ${\sf L}\in{\cal L}(G)$ certifying that $\etw(G)=\O(k^4)$.

Our paper is organized as follows. In~\autoref{dadyboo} we introduce the weak topological minor relation and 
we prove that the edge treewidth parameter is closed under this relation. In~\autoref{uiower} 
we identify the obstruction set (with respect to weak topological minors) 
for the graphs with edge treewidth at most two. In~\autoref{nioklop} we prove the parametric 
equivalence of edge treewidth with the parameter defined in~\eqref{lkolpio} and in~\autoref{opertlsok}
we use this equivalence in order to provide a universal obstruction set for edge treewidth, under the weak topological minor relation. In~\autoref{npcomoplo} we prove that checking whether $\etw(G)\leq k$
is an {\sf NP}-complete problem. 
\section{Characterizations of edge-treewidth}
\label{dadyboo}
We denote by $\Bbb{N}$ the set of non-negative integers.
Given two integers $p$ and
$q,$ the set $[p,q]$ refers to the set of every integer $r$ such that $p ≤ r ≤ q.$
For an integer $p≥ 1,$ we set $[p]=[1,p]$.

\paragraph{Tree-width and edge-treewidth by means of tree layouts}

A {\em rooted tree} is a pair $(T,r)$ where $T$ is a tree and $r$ a distinguished node called the {\em root}. If $u$ and $v$ are two  nodes (possibly equal) of a rooted tree $(T,r)$ such that $u$ belongs to the path from $r$ to $v$, then $u$ is an ancestor of $v$ and $v$ a descendant of $u$ (with the convention that a node is an ancestor and a descendant of itself). 

\begin{definition}[Tree layout]
Let $G$ be a graph. A \emph{tree layout} of $G$ is a triple $\T=(T,r,\tau)$ where $(T,r)$ is a rooted tree and $\tau:V(G)\rightarrow V(T)$ is 
an {injective} mapping such that for every edge $\{x,y\}\in E(G)$, $\tau(x)$ is an ancestor of $\tau(y)$ in $(T,r)$ or vice-versa. We let $\mathcal{T}(G)$ denote the set of tree-layouts of $G$.
\end{definition}


Mimicking what we did with layouts, we can similarly define cost functions on tree-layouts. Given a graph $G$ and a  tree-layout $\T=(T,r,\tau)$ of $G$, such a function $\lambda_{G,\T}:V(T)\to\N$ takes as input a node $t$ of $T$ and outputs a non-negative integer.  Again we use $λ$ as a shortcut of  $\lambda_{G,\T}$, when the pair $G,\T$ is clear from the context.
Analogously to the definition of ${\bf p}_{\delta}(G,\L)$, we define  ${\bf p}_{\lambda}(G,\T)=\max\{\lambda(u)\mid u\in V(T)\}$. 
This yields the definition of the following general parameter:
$${\bf p}_{\lambda}(G)=\min\{{\bf p}_{\lambda}(G,\T) \mid \T\in \mathcal{T}(G)\}.$$

Suppose that $\T=(T,r,\tau)$ is a tree-layout of $G$. For every node $u$ of $T$, we let $T_u$ denote the subtree rooted at $u$ induced by the descendants of $u$ and define $X_{\T}(u)\subseteq V(G)$ as the set of vertices mapped to the nodes of $T_u$, that is $X_{\T}(u)=\{x\in V(G)\mid \tau(x) \mbox{ is a descendant of } u \mbox{ in } (T,r)\}$. Let us consider the following two cost functions defined on tree-layouts:
\begin{align*}
\lambda^{\sf v}(u) =  |N_{G}(X_{\T}(u))|,         & & \lambda^{\sf e}(u)  =  |E_{G}(X_{\T}(u))|.
\end{align*}

It is known that $\tw(G)={\bf p}_{\lambda^{\sf v}}(G)$ \cite{SeymourT93graph,DendrisKT97fugi}. We prove the pendant equality for edge tree-width. 

\begin{theorem}  $\etw={\bf p}_{\lambda^{\sf e}}$.
\end{theorem}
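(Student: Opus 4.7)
The plan is to prove $\etw={\bf p}_{\lambda^{\sf e}}$ by establishing both inequalities, following the template of the classical characterization $\tw={\bf p}_{\lambda^{\sf v}}={\bf p}_{\delta^{\sf vc}}$ from~\cite{DendrisKT97fugi,SeymourT93graph}. Since $\etw={\bf p}_{\delta^{\sf ec}}$ is already recorded in the paper, it is enough to show ${\bf p}_{\delta^{\sf ec}}(G)={\bf p}_{\lambda^{\sf e}}(G)$ for every graph $G$. The two characterizations are connected by natural constructions in both directions: tree layouts flatten to linear layouts via a DFS preorder traversal, while linear layouts unfold into tree layouts via a recursion driven by the connected components of $G[S_i]$.

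For the inequality ${\bf p}_{\delta^{\sf ec}}(G)\leq{\bf p}_{\lambda^{\sf e}}(G)$, fix a tree layout $\T=(T,r,\tau)$ and let ${\sf L}=\langle x_1,\ldots,x_n\rangle$ be the linear layout obtained by visiting the vertices of $G$ in DFS preorder of $T$ (skipping nodes outside $\tau(V(G))$). Note first that $X_{\T}(\tau(x_i))\subseteq S_i$ for every $i$, since descendants come after their ancestors in preorder. The key claim is $C_G(S_i,x_i)\subseteq X_{\T}(\tau(x_i))$; it follows inductively along paths in $G[S_i]$ starting at $x_i$, using the ancestor/descendant edge property of $\T$ to rule out every step that would leave $X_{\T}(\tau(x_i))$. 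Once this is established, every edge of $E_G(C_G(S_i,x_i))$ automatically lies in $E_G(X_{\T}(\tau(x_i)))$: given such an edge $\{u,v\}$ with $u\in C_G(S_i,x_i)$, its other endpoint $v$ cannot lie in $S_i$ (otherwise it would join the component $C_G(S_i,x_i)$), hence $v\notin X_{\T}(\tau(x_i))\subseteq S_i$. Therefore $\delta^{\sf ec}(i)\leq \lambda^{\sf e}(\tau(x_i))$ for every $i$, and the inequality follows.

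For the reverse inequality, fix a linear layout ${\sf L}=\langle x_1,\ldots,x_n\rangle$ and build $\T=(T,r,\tau)$ recursively. Introduce a fake root $r$; for each connected component $D$ of $G$, attach to $r$ a subtree whose root corresponds to the smallest-index vertex $x_i$ of $D$, and build that subtree recursively as follows: after placing $x_i$, consider the connected components $C_1,\ldots,C_p$ of $G[D\setminus\{x_i\}]$ and, for each $C_j$, attach under $\tau(x_i)$ the subtree built for $C_j$, whose root corresponds to the smallest-index vertex of $C_j$. A short induction confirms that if $x_j$ is the smallest-index vertex of such a component, then that component equals $C_G(S_j,x_j)$, so the recursion is consistent and every edge $\{u,v\}\in E(G)$ joins an ancestor to a descendant in $T$. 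By construction $X_{\T}(\tau(x_i))=C_G(S_i,x_i)$ for every $i$, so $\lambda^{\sf e}(\tau(x_i))=\delta^{\sf ec}(i)$, which yields ${\bf p}_{\lambda^{\sf e}}(G,\T)={\bf p}_{\delta^{\sf ec}}(G,{\sf L})$; the fake root itself contributes cost $0$.

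The main delicacy is on the DFS side, where one has to make sure that no edge internal to $X_{\T}(\tau(x_i))$ is wrongly counted as leaving $C_G(S_i,x_i)$. Once one observes that $X_{\T}(\tau(x_i))\subseteq S_i$ and that $C_G(S_i,x_i)$ is a full connected component of $G[S_i]$, the two potential mismatches cancel cleanly as explained above. Everything else is bookkeeping that parallels the vertex analogue and poses no further difficulty.
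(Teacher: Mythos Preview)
Your proposal is correct and follows essentially the same approach as the paper: a DFS preorder to turn a tree layout into a linear one, and the recursive ``smallest-vertex-per-component'' construction to go back. The one difference is that in the second direction you assert the vertex-set equality $X_{\T}(\tau(x_i))=C_G(S_i,x_i)$ outright, whereas the paper argues only at the level of individual boundary edges; your formulation is slightly sharper and makes the cost comparison $\lambda^{\sf e}(\tau(x_i))=\delta^{\sf ec}(i)$ immediate.
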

\begin{proof}
Let $G$ be a graph. 
We first prove that $\etw(G)\le{\bf p}_{\lambda^{\sf e}}(G)$.
Let $\T=(T,r,\rho)$ be a tree layout of $G$. 
Let also ${\sf L}=\langle x_1,\dots, x_n\rangle$ be a layout of $G$ obtained from a {DFS} ordering $\sigma$ of $(T,r)$: for two vertices $x$ and $y$ of $G$, we have $x\prec_{{\sf L}}y$ if and only if $\tau(x)\prec_{\sigma}\tau(y)$. 
Observe that by construction of ${\sf L}$, for every $i\in\{1, \dots, n\}$, we have $X_{\T}(\tau(x_i))\subseteq S_i$ and for every vertex $x_j\in S_i\setminus X_{\T}(\tau(x_i))$, $\tau(x_j)$ is neither an ancestor nor a descendant of $\tau(x_i)$ in $T$. It follows from the definition of a tree layout, that $C_G(S_i,x_i)\subseteq X_{\T}(\tau(x_i))$, proving the inequality.

\smallskip
We now prove that ${\bf p}_{\lambda^{\sf e}}(G)\le\etw(G)$.
From a layout ${\sf L}=\langle x_1, \dots, x_n\rangle$ of $G$, we recursively construct a rooted tree $(T,r)$ and an injective mapping $\tau: V(G)\rightarrow V(T)$ as follows. Initially $T$ is composed of a single (root) node $r$ and for every connected component $C$ of $G$, we create a child $u_C$ of $r$. If $x_C\in C$ is such that for every $y\in C$, $x_C\prec_{{\sf L}} y$ ($x_C$  is the smallest vertex of $C$ in ${\sf L}$), then we set $\tau(x_C)=u_C$. Then $(T,r)$ is completed by identifying for every connected component $C$ the node $u_c$ with the root of the rooted layout $\T_C=(T_C,r_C,\tau_C)$ of $G[C\setminus\{x_C\}]$ constructed from ${\sf L}[C\setminus\{x_C\}]$. Moreover, for every $x\in C\setminus\{x_C\}$, we set $\tau(x)=\tau_C(x)$. The resulting  triple $\T=(T,r,\tau)$ is clearly a tree-layout of $G$. Indeed if $\{x_i,x_j\}\in E(G)$ such that $x_i\prec_{{\sf L}} x_j$, then $x_j\in C_G(S_i,x_i)$ and thereby $\tau(x_i)$ is an ancestor of $\tau(x_j)$.

Let $x_j,x_i,x_h$ be three distinct vertices of $G$ such that $x_j\prec_{{\sf L}} x_i\prec_{{\sf L}} x_h$ and $\{x_j,x_h\}\in E_G(C_G(S_i,x_i))$. As $x_j\prec_{{\sf L}} x_h$ and ${\{x_j,x_h\}\in E(G)}$, $\tau(x_j)$ is an ancestor of $\tau(x_h)$. Moreover $\{x_j,x_h\}\in E_G(C_G(S_i,x_i))$ implies that $x_i$ and $x_h$ belongs to the same connected component of $G[S_j]$. As $x_i\prec_{{\sf L}} x_h$, this implies that $\tau(x_i)$ is an ancestor of $\tau(x_h)$. Finally, as $x_j\prec_{{\sf L}} x_i$, we obtain that $\tau(x_j)$ is an ancestor of $\tau(x_i)$, which we just argued, is an ancestor of $\tau( x_h)$. It follows that $\{x_j,x_h\}\in E_G(X_{\T}(\tau(x_i)))$, implying that ${\bf p}_{\lambda^{\sf e}}(G,\T)\le \max\{δ^{\sf ec}_{G,{\sf L}}(i)\mid i\in\{1, \dots, n\}\}$. 
\end{proof}

\paragraph{Relations in graphs.}

Given some graph parameter $\p$ and a partial ordering $\leq$ on graphs, we say that $\p$ is {\em closed under} $\leq$
if for every two graphs $H$ and $G$, it holds that $H\leq G\Rightarrow \p(H)\leq \p (G)$.

Given a vertex $v∈ V(G)$ of vertex-degree two and edge-degree two with neighbors $u$ and $w,$ we define the {\em dissolution} of $v$
to be the operation of deleting $v$ and adding the edge $\{u,w\}$ (here we agree that, in the case the edge $\{u,v\}$
already exists, we increase its multiplicity by one). Given two graphs $H,G,$ we say that $H$ is a {\em dissolution} of $G$
if $H$ can be obtained from $G$ after dissolving vertices of $G.$
A graph $H$ is a {\em topological minor} of a graph $G$,  denoted by $H\leq_{\sf tp} G$,  if $H$ is the dissolution of  some subgraph of $G$.

Contracting an edge $e=\{x,y\}$ in a graph $G$ results in the graph $G\textbackslash e$ such that $V(G\textbackslash e)=V(G)\setminus\{x,y\}\cup\{x_e\}$ and $\{u,v\}\in E(G\textbackslash e)$ if and only if either $u\neq x_e$, $v\neq x_e$ and $\{u,v\}\in E(G)$, or $v=x_e$ and either $\{x,u\}\in E(G)$ or $\{y,u\}\in E(G)$ (in the later case, we agree that the multiplicities of $\{x,u\}$ and $\{y,u\}$ are summed up so to define the multiplicity of $\{v,u\}$).
A graph $H$ is a {\em minor} of a graph $G$,  by $H\leq_{\sf mn} G$, if $H$ can be obtained from some subgraph of $G$ after a (possibly empty)  sequence of  edge contractions. Notice that if $H\leq_{\sf tp}G$, then $H\leq_{\sf mn}G$.

Two edge in a graph are called {\em incident} if they share some endpoint.
Given a graph $G$ and two incident edges $e_{1}=\{x,y\}$ and $e_{2}=\{x,z\}$ where $y\neq z$, the operation of {\em lifting}
the pair $e_1$, $e_{2}$ in $G$ removes the edges $e_{1}$ and $e_{2}$ and introduce the edge $\{y,z\}$ (in case $\{y,z\}$ already exists we increase by one its multiplicity). A graph $H$ is an {\em immersion} of a graph $G$,  by $H\leq_{\sf im} G$, if its can be obtained by a subgraph of $G$ after a sequence of incident edge lifting operations. Notice that if $H\leq_{\sf tp}G$, then $H\leq_{\sf im}G$.\medskip

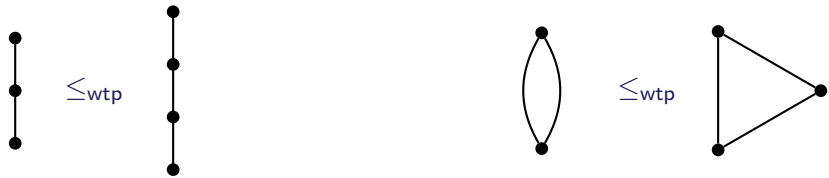
\begin{figure}[h]
\centering

\begin{tikzpicture}[thick,scale=0.7]
\tikzstyle{sommet}=[circle, draw, fill=black, inner sep=0pt, minimum width=4pt]

\draw  (-5,1)  node[sommet]{}
-- (-5,0)  node[sommet]{}
-- (-5,-1)  node[sommet]{};

\node[] (a) at (-3.5,0) {$\wtm$};

\draw  (-2,1.5)  node[sommet]{}
-- (-2,0.5)  node[sommet]{}
-- (-2,-0.5)  node[sommet]{}
-- (-2,-1.5)  node[sommet]{};

\begin{scope}[xshift=5cm]
\foreach \x/\y in {90/1,270/2}{
\draw node[sommet] (\y) at (\x:1.1){};
}

\draw[-,>=latex] (1) to[bend right] (2);
\draw[-,>=latex] (1) to[bend left] (2);
\node[] (a) at (2,0) {$\wtm$};
\end{scope}

\begin{scope}[xshift=9cm]

\foreach \x/\y in {120/1,240/2,360/3}{
\draw node[sommet] (\y) at (\x:1.3){};
}

\draw (1) -- (2) -- (3) -- (1) ;

\end{scope}
\end{tikzpicture}

\caption{Weak topological minor reduction rule. The path $P_3$ on three vertices is a weak topological minor of the path $P_4$. The cycle $C_2$ on two vertices is a weak topological minor of the cycle $C_3$.}
\label{weaktopminor}
\end{figure}

It is known that $\tw$ and $\pw$ are closed under minors and also closed under topological minors.
Also $\cw$ is closed under immersions.  Interestingly, edge-treewidth does not enjoy any of the closeness 
properties of treewidth, pathwidth, or cutwidth. 
For instance, in \autoref{counterex}, the graph $G_4$ is a topological minor of the graph $G_3$, proving that edge-treewidth is not closed under topological minor (and therefore neither closed under minors). Likewise, $G_2$ contains $G_1$ as an immersion, showing that edge-treewidth is not closed under immersion.

\begin{figure}[h]
\centering
\begin{center}
\begin{tikzpicture}[thick,scale=0.6]
\tikzstyle{sommet}=[circle, draw, fill=black, inner sep=0pt, minimum width=4pt]

\begin{scope}[xshift=0cm,yshift=-1cm]
\draw node[sommet] (1) at (-1.5,0){};
\draw node[sommet] (2) at (0,0){};
\draw node[sommet] (3) at (1.5,0){};
\draw node[sommet] (4) at (0,2){};

\draw [-] (1) to (4);
\draw [-] (2) to (4);
\draw [-] (3) to (4);
\draw[-,>=latex] (1) to[bend right=30] (2);
\draw[-,>=latex] (2) to[bend right=30] (1);
\draw[-,>=latex] (2) to[bend right=30] (3);
\draw[-,>=latex] (3) to[bend right=30] (2);

\node[] (a) at (0,-1.5) {$\etw(G_1)=3$};

\draw node[sommet] (a) at (0,-3){};
\draw node[sommet] (x) at (0,-4.5){};
\draw node[sommet] (b) at (-1,-6){};
\draw node[sommet] (c) at (1,-6){};
\draw[-,very thick,blue] (a) to (x);
\draw[-,very thick,blue] (x) to (b);
\draw[-,very thick,blue] (x) to (c);

\draw[-,>=latex] (a) to[bend right=30] (x);
\draw[-,>=latex] (a) to[bend right=30] (b);
\draw[-,>=latex] (a) to[bend left=30] (c);
\draw[-,>=latex] (x) to[bend right=20] (b);
\draw[-,>=latex] (x) to[bend right=40] (b);
\draw[-,>=latex] (x) to[bend left=20] (c);
\draw[-,>=latex] (x) to[bend left=40] (c);

\end{scope}

\begin{scope}[xshift=6cm,yshift=-1cm]
\draw node[sommet] (1) at (-1.5,0){};
\draw node[sommet] (2) at (0,0){};
\draw node[sommet] (3) at (1.5,0){};
\draw node[sommet] (4) at (0,2){};

\draw [-] (1) to (4);
\draw [-] (2) to (4);
\draw [-] (3) to (4);
\draw[-,>=latex] (1) to (2);
\draw[-,>=latex] (2) to (3);
\draw[-,>=latex] (1) to[bend right=30] (3);

\node[] (a) at (0,-1.5) {$\etw(G_2)=4$};

\draw node[sommet] (a) at (0,-3){};
\draw node[sommet] (x) at (0,-4.5){};
\draw node[sommet] (b) at (0,-6){};
\draw node[sommet] (c) at (0,-7.5){};
\draw[-,very thick,blue] (a) to (x);
\draw[-,very thick,blue] (x) to (b);
\draw[-,very thick,blue] (b) to (c);

\draw[-,>=latex] (a) to[bend right=30] (x);
\draw[-,>=latex] (x) to[bend right=30] (b);
\draw[-,>=latex] (x) to[bend left=30] (c);
\draw[-,>=latex] (a) to[bend left=40] (b);
\draw[-,>=latex] (b) to[bend right=40] (c);
\draw[-,>=latex] (c) to[bend left=40] (a);
\end{scope}

\begin{scope}[xshift=12cm]

\draw (0,0) circle (1.5) ;

\draw node[sommet] (0) at (0,0){};

\foreach \x/\y in {90/1,150/2, 210/3, 270/4, 330/5, 30/6}{
\draw node[sommet] (\y) at (\x:1.5){};
} 
\foreach \z in {2,4,6}{
\draw[-] (0) to (\z);
\draw[-,>=latex] (0) to[bend right=30] (\z);
\draw[-,>=latex] (\z) to[bend right=30] (0);
}

\node[] (a) at (0,-2.5) {$\etw(G_3)=6$};

\draw node[sommet] (ab) at (0,-4){};
\draw node[sommet] (bc) at (0,-5.2){};
\draw node[sommet] (ca) at (0,-6.4){};
\draw node[sommet] (x) at (0,-7.6){};
\draw node[sommet] (a) at (-1.5,-8.8){};
\draw node[sommet] (b) at (0,-8.8){};
\draw node[sommet] (c) at (1.5,-8.8){};
\draw[-,very thick,blue] (ab) to (bc);
\draw[-,very thick,blue] (bc) to (ca);
\draw[-,very thick,blue] (ca) to (x);
\draw[-,very thick,blue] (x) to (a);
\draw[-,very thick,blue] (x) to (b);
\draw[-,very thick,blue] (x) to (c);

\draw[-,>=latex] (ab) to[bend right=30] (a);
\draw[-,>=latex] (ab) to[bend right=30] (b);
\draw[-,>=latex] (bc) to[bend right=30] (b);
\draw[-,>=latex] (bc) to[bend left=30] (c);
\draw[-,>=latex] (ca) to[bend right=40] (a);
\draw[-,>=latex] (ca) to[bend left=40] (c);
\draw[-,>=latex] (a) to[bend left=20] (x);
\draw[-,>=latex] (a) to[bend left=40] (x);
\draw[-,>=latex] (a) to[bend left=60] (x);
\draw[-,>=latex] (b) to[bend right=20] (x);
\draw[-,>=latex] (b) to[bend right=40] (x);
\draw[-,>=latex] (b) to[bend right=60] (x);
\draw[-,>=latex] (c) to[bend right=20] (x);
\draw[-,>=latex] (c) to[bend right=40] (x);
\draw[-,>=latex] (c) to[bend right=60] (x);

\end{scope}

\begin{scope}[xshift=18cm]
\draw node[sommet] (0) at (0,0){};

\foreach \x/\y in {150/2, 270/4,  30/6}{
\draw node[sommet] (\y) at (\x:1.5){};
} 
\foreach \z in {2,4,6}{
\draw[-] (0) to (\z);
\draw[-,>=latex] (0) to[bend right=30] (\z);
\draw[-,>=latex] (\z) to[bend right=30] (0);
}
\draw[-,>=latex] (2) to[bend right=30] (4);
\draw[-,>=latex] (4) to[bend right=30] (6);
\draw[-,>=latex] (6) to[bend right=30] (2);

\node[] (a) at (0,-2.5) {$\etw(G_4)=8$};

\draw node[sommet] (a) at (0,-4){};
\draw node[sommet] (x) at (0,-5.5){};
\draw node[sommet] (b) at (0,-7){};
\draw node[sommet] (c) at (0,-8.5){};
\draw[-,very thick,blue] (a) to (x);
\draw[-,very thick,blue] (x) to (b);
\draw[-,very thick,blue] (b) to (c);

\draw[-,>=latex] (a) to[bend right=30] (x);
\draw[-,>=latex] (a) to[bend right=50] (x);
\draw[-,>=latex] (a) to[bend right=70] (x);
\draw[-,>=latex] (x) to[bend right=30] (b);
\draw[-,>=latex] (x) to[bend right=50] (b);
\draw[-,>=latex] (x) to[bend right=70] (b);
\draw[-,>=latex] (x) to[bend left=30] (c);
\draw[-,>=latex] (x) to[bend left=50] (c);
\draw[-,>=latex] (x) to[bend left=70] (c);
\draw[-,>=latex] (a) to[bend left=50] (b);
\draw[-,>=latex] (b) to[bend right=50] (c);
\draw[-,>=latex] (c) to[bend left=50] (a);
\end{scope}

\end{tikzpicture}
\end{center}
\caption{The graph $G_1$ contains the graph $G_2$ as an immersion, while the graph $G_3$ contains the graph $G_4$ as a topological minor. The values of $\etw(G_1$, $\etw(G_2)$, $\etw(G_3)$ and $\etw(G_4)$ are certified by the tree-layouts (in blue) drawn below the respective graphs. The lower bounds on the edge-treewidth of the above graphs have been verified by exhaustively considering all possible layouts (using a computer program).
}
\label{counterex}
\end{figure}
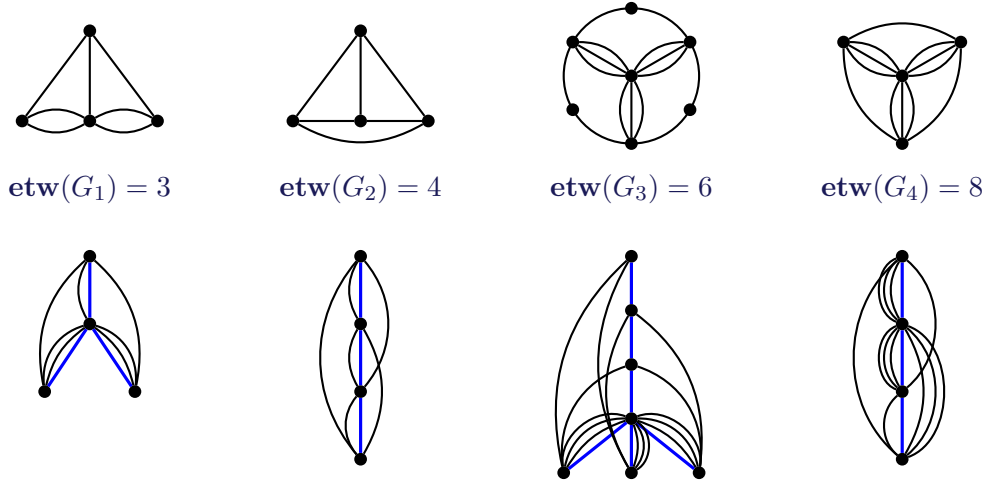
\medskip

 Our next step is to introduce a new partial ordering relation on graphs and prove that $\etw$ is closed in this new relation.
A graph $H$ is a \emph{weak topological minor} of a graph $G$, denoted by $H\wtm G$, if $H$ is obtained from a subgraph of $G$ by contracting edges whose both endpoints have edge-degree two and vertex-degree two (see \autoref{weaktopminor}). We observe that the 2-cycle is a weak topological minor of the $3$-cycle (and henceforth of every chordless cycle).

\begin{theorem} \label{th_weak_topological_minor}
Edge-treewidth is closed under taking weak topological minors.
\end{theorem}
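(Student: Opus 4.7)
The plan is to verify closure separately for the two operations that generate the weak topological minor relation: taking subgraphs, and contracting an edge $e=\{x,y\}$ where both $x,y$ have vertex-degree and edge-degree equal to two. Throughout I rely on the tree-layout characterization just proven, $\etw=\mathbf{p}_{\lambda^{\sf e}}$, which makes both operations transparent to handle. Subgraph closure reduces to the single-edge and single-vertex deletion cases. For edge deletion, the same tree-layout $\T$ remains valid for $G-e$ and the cost $\lambda^{\sf e}$ can only decrease node by node. For vertex deletion, I take an optimal $\T=(T,r,\tau)$ of $G$ and form $\T'$ by suppressing $\tau(v)$ from $T$ (promoting its children to become children of its parent, or installing a new root if $\tau(v)=r$) and restricting $\tau$ to $V(G)\setminus\{v\}$; suppression preserves ancestor-descendant relations among the remaining nodes, so $\T'$ is a valid tree-layout of $G-v$, and $X_{\T'}(u)=X_\T(u)\setminus\{v\}$ gives $|E_{G-v}(X_{\T'}(u))|\leq |E_G(X_\T(u))|$.

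The core of the argument is the contraction case. Write $a$ (respectively $b$) for the unique other neighbor of $x$ (respectively $y$); the edges $\{x,a\}$ and $\{y,b\}$ have multiplicity one by the degree constraints, and $a$ may coincide with $b$ (creating a multi-edge in $G/e$). Fix an optimal tree-layout $\T=(T,r,\tau)$ of $G$. Since $\{x,y\}\in E(G)$, the pair $\tau(x),\tau(y)$ is in ancestor relation in $T$; without loss of generality $\tau(x)$ is an ancestor of $\tau(y)$. I build $\T'=(T',r,\tau')$ of $G/e$ by suppressing $\tau(y)$ from $T$ and setting $\tau'(z)=\tau(x)$ for the contracted vertex $z$, while keeping $\tau'(v)=\tau(v)$ for every other vertex $v$.

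Two checks then remain. First, $\T'$ is a valid tree-layout: the only edges of $G/e$ not already present in $G-\{x,y\}$ are $\{z,a\}$ and $\{z,b\}$. For $\{z,a\}$, the pair $\tau(x)=\tau'(z)$ and $\tau'(a)$ was already in ancestor relation in $T$ and remains so in $T'$. For $\{z,b\}$, the pair $\tau(b),\tau(y)$ was in ancestor relation in $T$; a short case split on whether $\tau(b)$ lies above $\tau(x)$, strictly between $\tau(x)$ and $\tau(y)$, or in the subtree rooted at $\tau(y)$ shows that after the suppression, $\tau'(z)=\tau(x)$ and $\tau'(b)$ are in ancestor relation in $T'$. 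Second, the cost does not grow: since $\tau(x)$ is an ancestor of $\tau(y)$, for every $u\in V(T')\subseteq V(T)$ the conditions $x\in X_\T(u)$ and $y\in X_\T(u)$ coincide, so either $X_{\T'}(u)=(X_\T(u)\setminus\{x,y\})\cup\{z\}$ or $X_{\T'}(u)=X_\T(u)$; in both cases a direct bijection between boundary edges (matching $\{z,a\}\leftrightarrow\{x,a\}$, $\{z,b\}\leftrightarrow\{y,b\}$, and leaving all other edges alone) yields $|E_{G/e}(X_{\T'}(u))|\leq |E_G(X_\T(u))|$. The main obstacle is precisely this last piece of bookkeeping: suppressing a tree node while simultaneously contracting two graph vertices, and verifying that no boundary edge is inadvertently created or lost at any node of $T'$. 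Once all three cases are combined, $H\wtm G$ yields $\etw(H)\leq\etw(G)$, as desired.
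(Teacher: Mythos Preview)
Your approach mirrors the paper's almost exactly: work with tree-layouts, assume $\tau(x)$ is an ancestor of $\tau(y)$, place the contracted vertex at $\tau(x)$, and compare $\lambda^{\sf e}$ node by node. The subgraph part and the validity check for $\T'$ are fine. There is, however, a genuine gap in the cost-comparison step for the contraction.

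Your claim that ``for every $u\in V(T')$ the conditions $x\in X_\T(u)$ and $y\in X_\T(u)$ coincide'' is false. From $\tau(x)$ being an ancestor of $\tau(y)$ you only get the implication $x\in X_\T(u)\Rightarrow y\in X_\T(u)$. If $u$ is a node of $T$ lying strictly between $\tau(x)$ and $\tau(y)$ (and such nodes may certainly exist and survive into $T'$), then $y\in X_\T(u)$ while $x\notin X_\T(u)$. In this missed case $X_{\T'}(u)=X_\T(u)\setminus\{y\}$, which is neither of your two announced possibilities, and your ``direct bijection'' breaks: $\{z,b\}$ crosses $X_{\T'}(u)$ exactly when $b\in X_\T(u)$, whereas $\{y,b\}$ crossed $X_\T(u)$ exactly when $b\notin X_\T(u)$, so these two edges do not match up.

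The fix is easy and is exactly what the paper does: handle this third case explicitly. Here the edge $\{x,y\}$ always lies in $E_G(X_\T(u))$ (since $y\in X_\T(u)$ and $x\notin X_\T(u)$) and disappears in $G/e$; this spare edge absorbs the possible $+1$ coming from the $\{z,b\}$ versus $\{y,b\}$ flip, and one still obtains $|E_{G/e}(X_{\T'}(u))|\leq |E_G(X_\T(u))|$. The paper organises this bookkeeping as a case analysis over the nodes $u$ on the path in $T$ through $\tau(x)$, $\tau(y)$ and $\tau(b)$, which is precisely what your argument needs to add.
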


\begin{proof}
Let $G=(V,E)$ be a graph and $e=\{x,y\}$ be an edge such that $x$, $y$ are two vertices each of vertex-degree two and edge-degree two. Let $\T=(T,r,\tau)$ be a tree layout of $G$. As $\{x,y\}\in E$, we can assume, without loss of generality, that $\tau(x)$ is an ancestor of $\tau(y)$. A tree layout $\T'=(T,r,\tau')$ of $G\textbackslash e$ is obtained from $\T$ as follows. Let $x_{e}$ be the vertex resulting from the contraction of $e$. For every vertex $z\in V(G\textbackslash e)$ such that $z\neq x_e$, we set $\tau'(z)=\tau(z)$, and $\tau'(x_e)=\tau(x)$. 

Let us first argue $\T'=(T,r,\tau')$ is a tree layout of $G\textbackslash e$. By construction, we have that for every edge $\{a,b\}$ not incident to $x_e$, either $\tau'(a)$ is an ancestor of $\tau'(b)$ or vice versa. Let $w$ be the unique neighbor of $x$ distinct from $y$ in $G$ and $z$ be the unique neighbor of $y$ distinct from $x$ in $G$. Observe that in $G\textbackslash e$, the neighbors of $x_e$ are $z$ and $w$. As in $\T$, $\tau(w)$ is an ancestor of $\tau(x)$ or vice versa and as $\tau'(x_e)=\tau(x)$, we have that $\tau'(w)$ is an ancestor of $\tau'(x_e)$ or vice versa. Suppose that $\tau(z)$ is a descendant of $\tau(y)$, then it is also a descendant of $\tau(x)$, implying that $\tau'(z)$ is a descendant of $\tau'(x_e)$. If on the contrary, $\tau(z)$ is an ancestor of $\tau(y)$, then it is either a descendant or an ancestor of $\tau(x)$. This, in turns, implies that $\tau'(z)$ is either a descendant or an ancestor of $\tau'(x_e)$.

It remains to prove that $\ewidth(G,\T')\le \ewidth(G,\T)$. Recall that contracting $e=\{x,y\}$ amounts to removing the vertex $y$ and its incident edges $\{x,y\}$ and $\{y,z\}$, identifying vertex $x$ with $x_e$, adding the edge $\{z,x_e\}$.  Let $P$ the smallest subpath of $T$ containing $\tau(x)$, $\tau(y)$ and $\tau(z)$. Observe that, by construction, for every node $u$ not in $P$ we have $E_G(X_{\T'}(u))=E_G(X_{\T}(u))$. So assume that $u$ is a node of $P$. From the previous paragraph we know that one of the following three cases holds:
\begin{enumerate}
\item $\tau(x)$ is an ancestor of $\tau(y)$ that is an ancestor of $\tau(z)$: if $u$ is an ancestor of $\tau(y)$, then we have that $E_G(X_{\T'}(u))=E_G(X_{\T}(u))\setminus\{\{x,y\}\}\cup \{\{z,x_e\}\}$. Otherwise we have ${E_G(X_{\T'}(u))=E_G(X_{\T}(u))\setminus\{\{y,z\}\}\cup \{\{z,x_e\}\}}$.
\item $\tau(x)$ is an ancestor of $\tau(z)$ that is an ancestor of $\tau(y)$: if $u$ is an ancestor of $\tau(z)$, then we have that $E_G(X_{\T'}(u))=E_G(X_{\T}(u))\setminus\{\{x,y\}\}\cup \{\{z,x_e\}\}$. Otherwise we have ${E_G(X_{\T'}(u))=E_G(X_{\T}(u))\setminus\{\{x,y\},\{y,z\}\}}$.
\item $\tau(z)$ is an ancestor of $\tau(x)$ that is an ancestor of $\tau(y)$: if $u$ is an ancestor of $\tau(x)$, then we have that $E_G(X_{\T'}(u))=E_G(X_{\T}(u))\setminus\{\{y,z\}\}\cup \{\{z,x_e\}\}$. Otherwise we have $E_G(X_{\T'}(u))=E_G(X_{\T}(u))\setminus\{\{x,y\},\{y,z\}\}$.
\end{enumerate}
It follows that for every node $u$ of $T$, we have that 
$\lambda^{\sf e}_{G,\T'}(u)\le \lambda^{\sf e}_{G,\T}(u)$, concluding the proof.
\end{proof}

%
%
%
%

\section{Obstructions}
\label{uiower}

Given a graph class ${\cal G}$ and a partial relation $\leq$, we define the obstruction of ${\cal G}$ with respect to $\leq$, denoted by 
$\obs_{\leq}({\cal G})$, as the set of all $\leq$-minimal graphs not in ${\cal G}$. Clearly, if ${\cal G}$ is closed under $\leq$, then 
$\obs_{\leq}({\cal G})$ can be seen as a complete characterisation of ${\cal G}$, as $G\in{\cal G}$ iff $\forall H\in\obs({\cal G})\ H\not\leq G$ 
Hereafter, for $i\in \Bbb{N},$ we denote by $\obs_i$  the weak-topological minor obstruction set of the family of graphs of edge treewidth at most $i$, in order words $\obs_i=\obs_{\wtm}(\{G\mid \etw(G)\leq i\})$.

We next prove that $\obs_3$ is infinite, while $\obs_1$ and $\obs_2$ are finite and respectively characterize forests and cactus graphs.

\begin{theorem} \label{th_etw_forest}
For a graph $G$, the following properties are equivalent: 
\begin{enumerate}
\item  ${\etw}(G)\le 1$; 
\item $G$ is a forest; 
\item $\obs_1=\{C_2\}$.
\end{enumerate}
\end{theorem}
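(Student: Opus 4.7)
The plan is to establish (1) $\Leftrightarrow$ (2) directly from the layout definition, and then derive (3) as a consequence using the closure property given by \autoref{th_weak_topological_minor}.

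For (2) $\Rightarrow$ (1), I would give an explicit tree layout. Given a forest $G$, add a fresh root $r$ and attach to it, as children, the roots chosen arbitrarily from each connected component; the tree structure of $G$ itself, after this rooting, supplies the rooted tree $(T,r)$, and $\tau$ is the identity on $V(G)$. For any internal node $u$, the set $X_{\T}(u)$ is exactly the vertex set of the subtree of $G$ hanging below $u$, which in a forest is connected to the rest of $G$ by at most one edge (the edge from $u$ to its parent in $G$, if any). Hence $\lambda^{\sf e}(u)\le 1$ for every $u$, giving $\etw(G)\le 1$.

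For (1) $\Rightarrow$ (2), I argue by contrapositive. Suppose $G$ contains a cycle. If that cycle has length $2$ (i.e.\ a double edge), then $C_2\le G$ as a subgraph, so $C_2\wtm G$. If the cycle is a simple cycle $C_n$ with $n\ge 3$, then all its vertices have vertex- and edge-degree $2$ inside $C_n$, so we may iteratively contract $n-2$ of its edges via the weak-topological-minor rule (as illustrated in \autoref{weaktopminor}) to reach $C_2$; hence again $C_2\wtm G$. A direct inspection shows $\etw(C_2)=2$: in any tree layout $\T=(T,r,\tau)$ of $C_2$ with vertices $x,y$, one of $\tau(x),\tau(y)$ is an ancestor of the other, say $\tau(x)$ is ancestor of $\tau(y)$; then $X_{\T}(\tau(y))=\{y\}$ and both parallel copies of $\{x,y\}$ lie in $E_G(\{y\})$, so $\lambda^{\sf e}(\tau(y))=2$. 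Combining this with \autoref{th_weak_topological_minor} yields $\etw(G)\ge\etw(C_2)=2$, and $G$ is not a forest.

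For the equivalence with (3), first note that the previous paragraph already shows $C_2\notin\{G\mid\etw(G)\le 1\}$. To see that $C_2$ is weak-topological-minor minimal, observe that both vertices of $C_2$ have vertex-degree $1$, so no contraction step is available, and every proper subgraph of $C_2$ has at most one edge, hence is a forest and satisfies (1) by the argument above. Thus $C_2\in\obs_1$. Conversely, if $H\in\obs_1$ then $\etw(H)\ge 2$, so by (1) $\Leftrightarrow$ (2) the graph $H$ is not a forest and therefore contains a cycle; the reduction described in the previous paragraph gives $C_2\wtm H$, and the minimality of $H$ forces $H=C_2$. Hence $\obs_1=\{C_2\}$.

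The main care points are to handle the multi-edge case cleanly (so that $C_2$ really is a subgraph whenever the cycle has length two) and to verify that no contraction step applies inside $C_2$, which is what prevents smaller candidates from appearing in $\obs_1$; both are immediate from the vertex-degree / edge-degree bookkeeping built into the definition of $\wtm$.
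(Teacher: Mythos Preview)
Your proof is correct and follows essentially the same approach as the paper. The paper organizes the argument as a cycle $(3)\Rightarrow(2)\Rightarrow(1)\Rightarrow(3)$ while you prove $(1)\Leftrightarrow(2)$ directly and then derive $(3)$; the content is identical (same tree-layout construction for forests, same reduction of any cycle to $C_2$ via \autoref{th_weak_topological_minor}), and your version adds the explicit computation that $\etw(C_2)=2$ and the explicit minimality check for $C_2$, which the paper simply asserts.
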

\begin{proof}
\emph{(3. $\Rightarrow$ 2.}) 
Suppose that $G$ contains a cycle. Then we have $C_2\wtm G$. Finally, as $\etw(C_2)=2$, \autoref{th_weak_topological_minor} implies $\etw(G)\ge 2$.

\noindent
\emph{(2. $\Rightarrow$ 1.}) Suppose that $G$ is a forest.  Consider the tree-layout $(T,r,\tau)$ where $T$ is obtained from $G$ by adding a root $r$ adjacent to an arbitrary vertex of every component of $G$ and where $\tau$ maps every vertex of $G$ to its copy in $T$. Clearly we have that $\ewidth(G,\T)=1$.

\noindent
\emph{(1. $\Rightarrow$ 3.}) Observe that $C_2$ is minimal for the weak topological minor relation and that $\etw(C_2)=2$. This implies that $C_2\in\obs_1$. Suppose that $\obs_1$ contains a graph $H$ distinct from $C_2$. It follows that $H$ excludes every $C_2$ as a weak topological minor. But then $H$ is a forest, implying that $\etw(H)\le 1$: contradiction.
\end{proof}

A graph in which that every edge belongs to at most one cycle is called a \emph{cactus graph}~\cite{HaynesHS98Fundamentals}. Equivalently, $G$ is a cactus graph if and only if its blocks (biconnected components) are cycles.
For a  graph $G$, ${\sf bc}(G)$ denote the set of its biconnected components, also called \emph{blocks}, and ${\sf cv}(G)$ denote the set of cut vertices of $G$. We define $B_G$ as the graph in which the vertex set is one-to-one mapped to ${\sf bc}(G)\cup{\sf cv}(G)$ and two vertices $x$ and $y$ of $B_G$ are adjacent if and and one if $x$ is mapped to a block $B$ and $y$ to a cut vertex belonging to $B$. Observe that $B_G$ is a tree where the leaves correspond to blocks of $G$. We call $B_G$ the \emph{block tree} of $G$.

\begin{theorem} \label{th_cactus}
For a graph $G$, the following properties are equivalent:
\begin{enumerate}
\item ${\etw}(G) \le 2$;
\item $G$ is a cactus graph;
\item $\obs_2=\{Z_2^i\mid 1\le i\le 4\}$  (see Figure \ref{etw2}).
\end{enumerate}
\end{theorem}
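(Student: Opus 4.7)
The plan is to prove the equivalences in the cycle (2)$\Rightarrow$(1), (1)$\Leftrightarrow$(3), identifying the four obstructions $Z_2^i$ as the generalised theta graphs $\theta_{a,b,c}$ with $a,b,c\in\{1,2\}$: the triple edge $\theta_{1,1,1}$, the $\theta_{1,1,2}$, the $\theta_{1,2,2}$, and $K_4^{-}=\theta_{2,2,2}$. The first implication builds a tree layout; the other directions combine a structural reduction of any non-cactus to one of these four minors with an explicit lower bound on $\etw(Z_2^i)$.

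For (2)$\Rightarrow$(1), I would induct on $|{\sf bc}(G)|$, using the block tree $B_G$. Pick a leaf block $B$ of $B_G$ with its unique cut vertex $c$, and let $G'=G-(V(B)\setminus\{c\})$. By induction, take a tree layout $\T'=(T',r',\tau')$ of $G'$ with $\ewidth(G',\T')\le 2$. If $B$ is a bridge $\{c,v\}$, add $\tau(v)$ as a new child of $\tau'(c)$ in $T$; then $X_{\T}(\tau(v))=\{v\}$ has a single outgoing edge. If $B$ is a cycle $c=v_1,v_2,\dots,v_k,v_1$, attach a descending path $\tau(v_2)\to\tau(v_3)\to\cdots\to\tau(v_k)$ as a new child subtree of $\tau'(c)$. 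For each new node $\tau(v_i)$ with $i\ge 2$, $X_{\T}(\tau(v_i))=\{v_i,\dots,v_k\}$, and the only edges leaving this set are $\{v_{i-1},v_i\}$ and $\{v_k,v_1\}$, so $\lambda^{\sf e}(\tau(v_i))=2$. Since $c\in X_{\T'}(u)$ for every ancestor $u$ of $\tau'(c)$, adding $\{v_2,\dots,v_k\}$ to $X_{\T}(u)$ introduces no new crossing edges at $u$. Hence $\ewidth(G,\T)\le 2$.

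For (1)$\Rightarrow$(3), I proceed in three substeps. First, verify that each $Z_2^i$ is non-cactus with $\etw(Z_2^i)\ge 3$: in any tree layout $\T=(T,r,\tau)$ of $Z_2^i$, the two hubs $u,v$ are comparable in $(T,r)$, say $\tau(u)$ is an ancestor of $\tau(v)$; let $u'$ be the child of $\tau(u)$ on the $\tau(u)$-$\tau(v)$ path. Then $X_{\T}(u')$ contains $v$ but not $u$, and each of the three internally disjoint $u$-$v$ paths contributes at least one edge of $E_G(X_{\T}(u'))$, yielding $\lambda^{\sf e}(u')\ge 3$. Second, check that each $Z_2^i$ is weak-topological-minor minimal among non-cactus graphs: the two hubs have edge-degree $\ge 3$ and every internal vertex is adjacent only to hubs, so no edge of $Z_2^i$ has both endpoints of vertex-degree and edge-degree two; thus every proper weak-topological minor of $Z_2^i$ is a proper subgraph of $Z_2^i$, and deleting any edge or vertex destroys one of the three paths, leaving a subdivided cycle (a cactus). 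Third, show that any non-cactus $G$ contains some $Z_2^i$ as a weak topological minor: some block $B$ of $G$ is 2-connected but not a cycle, so $B$ contains two vertices $u,v$ joined by three internally disjoint paths $P_1,P_2,P_3$; the subgraph $H=P_1\cup P_2\cup P_3$ is a theta, and every internal vertex of every $P_i$ has vertex-degree and edge-degree two in $H$, so consecutive internal vertices on the same path can be contracted (weak topological minor), reducing each $P_i$ to length at most two. The resulting graph is exactly some $\theta_{a,b,c}$ with $a,b,c\in\{1,2\}$. Applying \autoref{th_weak_topological_minor} yields $\etw(G)\ge \etw(Z_2^i)\ge 3$, so (1) implies $G$ contains no $Z_2^i$ as weak topological minor and (3) is established.

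Finally, (3)$\Rightarrow$(2) is immediate: if $G$ is not a cactus, the third substep above produces some $Z_2^i\wtm G$, contradicting (3). The principal obstacle is the case analysis in the third substep — ensuring that \emph{every} 2-connected non-cycle block (including ones with multi-edges and longer internal paths) admits one of the four $\theta_{a,b,c}$ as a weak topological minor — together with the verification that no further reductions across the four candidates are possible (pairwise incomparability under $\wtm$, which follows from a direct degree/vertex-count inspection).
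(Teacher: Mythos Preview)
Your approach mirrors the paper's: assemble a tree layout block by block for (2)$\Rightarrow$(1), show $\etw(Z_2^i)\ge 3$ by a path-crossing argument, check minimality by degree inspection, and extract a theta from any non-cactus. One small gap: in your lower-bound step you assert that the two hubs $u,v$ have $\tau(u),\tau(v)$ comparable in every tree layout, but for $Z_4^2=\theta_{2,2,2}$ the hubs are non-adjacent, so nothing in the definition of a tree layout forces comparability. The fix is immediate---since $\tau$ is injective, at least one of $\tau(u),\tau(v)$ is not an ancestor of the other, say $\tau(u)$; then $X_{\T}(\tau(u))$ contains $u$ but not $v$, and each of the three internally disjoint $u$--$v$ paths contributes a distinct edge to $E_G(X_{\T}(\tau(u)))$, giving $\lambda^{\sf e}(\tau(u))\ge 3$. (The paper sidesteps this by arguing directly on linear layouts: for the later hub $x_j$ in any ${\sf L}$, the set $C_G(S_j,x_j)$ contains $x_j$ but not $x_i$, and the three edge-disjoint paths each cross its boundary.) A cosmetic slip: $K_4^{-}$ is $\theta_{1,2,2}$, not $\theta_{2,2,2}$; the latter is $K_{2,3}$.
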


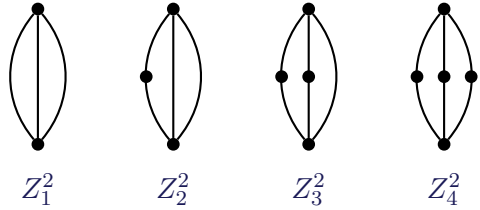
\begin{figure}[h]
\centering
\begin{center}
\begin{tikzpicture}[thick,scale=0.6]
\tikzstyle{sommet}=[circle, draw, fill=black, inner sep=0pt, minimum width=4pt]

\begin{scope}[xshift=0cm]
\foreach \x/\y in {-1.5/1,1.5/2}{
\draw node[sommet] (\y) at (0,\x){};
}

\draw[-,>=latex] (1) to[bend right=40] (2);
\draw[-,>=latex] (1) to[bend left=0] (2);
\draw[-,>=latex] (1) to[bend left=40] (2);

\node[] (a) at (0,-2.5) {$Z_1^2$};

\end{scope}

\begin{scope}[xshift=3cm]
\foreach \x/\y in {-1.5/1,1.5/2}{
\draw node[sommet] (\y) at (0,\x){};
}

\draw node[sommet] (3) at (-0.6,0){};

\draw[-,>=latex] (1) to[bend right=40] (2);
\draw[-,>=latex] (1) to[bend left=0] (2);
\draw[-,>=latex] (1) to[bend left=40] (2);

\node[] (a) at (0,-2.5) {$Z_2^2$};
\end{scope}

\begin{scope}[xshift=6cm]
\foreach \x/\y in {-1.5/1,1.5/2}{
\draw node[sommet] (\y) at (0,\x){};
}

\draw node[sommet] (3) at (-0.6,0){};
\draw node[sommet] (4) at (0,0){};

\draw[-,>=latex] (1) to[bend right=40] (2);
\draw[-,>=latex] (1) to[bend left=0] (2);
\draw[-,>=latex] (1) to[bend left=40] (2);

\node[] (a) at (0,-2.5) {$Z_3^2$};
\end{scope}

\begin{scope}[xshift=9cm]
\foreach \x/\y in {-1.5/1,1.5/2}{
\draw node[sommet] (\y) at (0,\x){};
}

\draw node[sommet] (3) at (-0.6,0){};
\draw node[sommet] (4) at (0,0){};
\draw node[sommet] (5) at (0.6,0){};

\draw[-,>=latex] (1) to[bend right=40] (2);
\draw[-,>=latex] (1) to[bend left=0] (2);
\draw[-,>=latex] (1) to[bend left=40] (2);

\node[] (a) at (0,-2.5) {$Z_4^2$};
\end{scope}

\end{tikzpicture}
\end{center}
\caption{The weak topological minor obstruction set $\obs_2=\{Z_1^2,Z_2^2,Z_3^2,Z_4^2\}$ for the graphs of edge-treewidth at most $2$.
}
\label{etw2}
\end{figure}

\begin{proof}
\emph{(3. $\Rightarrow$ 2.}) 
Let $G$ be a graph that is not a cactus graph. Then $G$ has an edge $\{x,y\}$ that is contained in two cycles, say $C_1$ and $C_2$. This implies that the subgraph $G[C_1\cup C_2]$ contains two vertices $u$ and $v$ and is composed of three edge-disjoint paths between $u$ and $v$. It follows that $G[C_1\cup C_2]$, and thereby $H$, contains one of the graphs $Z_i^2\in\obs_2$ ($1\le i\le 4$) as a weak topological minor.

\medskip
\noindent
\emph{(2. $\Rightarrow$ 1.})
Let $G$ be a cactus graph. As every block $B$ of $G$ is either an edge or a cycle, for every vertex $x\in B$, there exists a layout ${\sf L}(B,x)$ of $B$ starting at $x$ such that $\lwidth(B,{\sf L})\le 2$.

We recursively construct  $\T=(T,r,\tau)$ from the block tree $B_G$ of $G$. We root $B_G$ at an arbitrary leaf. Let $B_r$ be the block corresponding to that leaf. Observe that for every block $B$ distinct from $B_r$, the parent of its corresponding vertex in $B_G$ is mapped to a cut vertex $x_B\in B$.
We choose an arbitrary vertex $x_r$ of $B_r$ and define $\tau(x_r)$ as the root $r$ of $T$. Then for every vertices $x$ and $y$ of $B_r$, $\tau(y)$ is a child of $\tau(x)$ if and only if $x$ immediately precedes $y$ in ${\sf L}(B_r,x_r)$. Suppose that $B$ is a block of $G$ containing a cut vertex $x_B$ such that $x_B$ is the unique vertex of $B$ with $\tau(x_B)$ been defined. Then for every vertices $x$ and $y$ of $B$, $\tau(y)$ is a child of $\tau(x)$ if and only if $x$ immediately precedes $y$ in ${\sf L}(B,x_B)$. This clearly defines a tree-layout of $G$. As every edge of $G$ belongs to some block $B$ of $G$ and as for every block $\lwidth({\sf L},B)\le 2$, we have that for every node $u$ of $T$, $\lambda^{\sf e}(G,\T,u)\le 2$, implying that $\etw(G)\le 2$.

\medskip
\noindent
 \emph{(1. $\Rightarrow$ 3.}) 
We first prove that $\{Z_2^i\mid 1\le i\le 4\}\subseteq \obs_2$. Observe that none of these graphs has an edge incident to two vertices each of vertex-degree  and edge-degree two. So the graphs of $\obs_2$ are minimal for the weak topological minor relation. Consider a layout ${\sf L}\in\mathcal{L}(G)$ such that $\etw(G)= \lwidth({\sf L},G)$. Suppose that $G$ contains some graph $H\in \obs_2$ as a weak topological minor. Then $G$ contains two vertices, say $x_i$ and $x_j$ with $x_i\prec_{{\sf L}} x_j$, and $3$ edge-disjoint paths $P_1$, $P_2$ and $P_3$ between $x_i$ and $x_j$. Then $E_G(C_G(S_j,x_j))$ contains at least one edge from each of $P_1$, $P_2$ and $P_3$. This implies that $\etw(G)\ge 3$. 

Suppose for the sake of contradiction that there exists a graph $H\in \obs_2\setminus \{Z_2^i\mid 1\le i\le 4\}$. It follows that $H$ excludes every $Z_2^i$ ($1\le i\le 4)$ as a weak topological minor. But then $H$ is a cactus graph, implying that $\etw(H)\le 2$ and thereby $H\not\in\obs_2$.
\end{proof}

\begin{lemma}
\label{kolop}
The obstruction set  $\obs_3$ is infinite. \end{lemma}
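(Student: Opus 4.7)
The plan is to exhibit an infinite family of pairwise non-isomorphic graphs in $\obs_3$. For each integer $n\ge 3$, I would define $G_n$ to be the \emph{doubled $n$-cycle}: vertices $v_0,\ldots,v_{n-1}$, with two parallel edges between $v_i$ and $v_{i+1\bmod n}$ for every $i$. The key structural feature is that every vertex of $G_n$ has $\vdeg=2$ and $\edeg=4$, so $G_n$ admits no weak-topological contraction at all. Consequently, every strict weak topological minor $H$ of $G_n$ must factor as $H\wtm G'$ for some strict subgraph $G'\subsetneq G_n$, and any such $G'$ is contained in either $G_n-v$ or $G_n-e$ for some vertex $v$ or edge $e$. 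Combining this observation with \autoref{th_weak_topological_minor}, I would reduce the whole argument to three claims: (i)~$\etw(G_n)\ge 4$, (ii)~$\etw(G_n-v)\le 3$, and (iii)~$\etw(G_n-e)\le 3$.

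Claim~(i) would follow from a one-line layout argument: for any ${\sf L}=\langle x_1,\ldots,x_n\rangle$, the graph $G_n-x_1$ is a doubled path and hence connected, so $C_{G_n}(S_2,x_2)=S_2$ and $\delta^{\sf ec}(2)=|E_{G_n}(S_2)|=\edeg_{G_n}(x_1)=4$. Claim~(ii) is also immediate, since $G_n-v$ is a doubled path and the natural left-to-right layout achieves $\delta^{\sf ec}(i)=2$ at every position. The main work will go into claim~(iii). My plan there is to take $e$ to be one of the two parallel copies of $\{v_0,v_1\}$ and to use the layout
\[
{\sf L}=\langle v_0,\,v_{n-1},\,v_{n-2},\,\ldots,\,v_2,\,v_1\rangle,
\]
which starts at an endpoint of the newly-simple edge and then winds around the cycle in the opposite direction. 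For every $i\ge 2$, the set $S_i$ would be a contiguous cycle arc, so $C_{G_n-e}(S_i,x_i)=S_i$, and the cut $E_{G_n-e}(S_i)$ would consist of exactly two cycle edges: the unique remaining simple copy of $\{v_0,v_1\}$ (contributing $1$) together with the ``moving'' doubled boundary edge (contributing $2$), for a total of $3$.

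Once (i)--(iii) are in place, $G_n\in\obs_3$ for every $n\ge 3$; since $|V(G_n)|=n$, the graphs $G_3,G_4,\ldots$ are pairwise non-isomorphic and witness that $\obs_3$ is infinite. I expect (iii) to be the main technical point: what makes the winding layout work is that the unique simple edge ``shields'' one side of every cut throughout the traversal, which is precisely the reason a single edge deletion suffices to drop $\etw(G_n)$ from~$4$ to~$3$.
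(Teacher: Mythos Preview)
Your proposal is correct and uses exactly the same family of graphs as the paper: the doubled $n$-cycles (the paper calls them $Z_n^3$, starting at $n=2$). Your lower bound argument via $\delta^{\sf ec}(2)=4$ is a minor variant of the paper's, which instead looks at the last vertex in the layout.

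Where your write-up actually improves on the paper is the minimality step. The paper asserts that the $Z_n^3$ are ``minimal for the weak topological minor relation'' on the sole grounds that every vertex has edge-degree~$4$; this only shows that no contraction is available in $Z_n^3$ itself, not that every proper subgraph has $\etw\le 3$. Your claims~(ii) and~(iii), together with the observation that any strict weak topological minor factors through $G_n-v$ or $G_n-e$ and an appeal to \autoref{th_weak_topological_minor}, supply exactly the missing verification. Two small polish points: in~(ii) the layout actually gives $\delta^{\sf ec}\le 2$ (not $=2$ at position~$1$), which is of course still $\le 3$; and in~(iii) you should remark that $G_n$ is edge-transitive, so treating a single edge $e$ suffices.
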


\begin{figure}[h]
\begin{center}
\begin{tikzpicture}[thick,scale=0.7]
\tikzstyle{sommet}=[circle, draw, fill=black, inner sep=0pt, minimum width=4pt]

\begin{scope}[xshift=0cm]
\foreach \x/\y in {90/1,270/2}{
\draw node[sommet] (\y) at (\x:1.1){};
}

\draw[-,>=latex] (1) to[bend right=20] (2);
\draw[-,>=latex] (1) to[bend right=60] (2);
\draw[-,>=latex] (1) to[bend left=20] (2);
\draw[-,>=latex] (1) to[bend left=60] (2);

\node[] (a) at (0,-2) {$Z_2^3$};

\end{scope}

\begin{scope}[xshift=2.8cm]
\foreach \x/\y in {120/1,240/2,360/3}{
\draw node[sommet] (\y) at (\x:1.3){};
}

\draw[-,>=latex] (1) to[bend right=20] (2);
\draw[-,>=latex] (1) to[bend left=20] (2);
\draw[-,>=latex] (2) to[bend right=20] (3);
\draw[-,>=latex] (2) to[bend left=20] (3);
\draw[-,>=latex] (3) to[bend right=20] (1);
\draw[-,>=latex] (3) to[bend left=20] (1);

\node[] (a) at (0,-2) {$Z_3^3$};
\end{scope}

\begin{scope}[xshift=6.8cm]
\foreach \x/\y in {0/1,90/2,180/3,270/4}{
\draw node[sommet] (\y) at (\x:1.3){};
}

\draw[-,>=latex] (1) to[bend right=20] (2);
\draw[-,>=latex] (1) to[bend left=20] (2);
\draw[-,>=latex] (2) to[bend right=20] (3);
\draw[-,>=latex] (2) to[bend left=20] (3);
\draw[-,>=latex] (3) to[bend right=20] (4);
\draw[-,>=latex] (3) to[bend left=20] (4);
\draw[-,>=latex] (4) to[bend right=20] (1);
\draw[-,>=latex] (4) to[bend left=20] (1);

\node[] (a) at (3,0) {\dots};
\node[] (a) at (0,-2) {$Z_4^3$};
\end{scope}

\begin{scope}[xshift=12cm]
\foreach \x/\y in {0/1,1.5/2,3/3,7/4,8.5/5}{
\draw node[sommet] (\y) at (\x,0){};
}

\draw[-,>=latex] (1) to[bend right=20] (2);
\draw[-,>=latex] (1) to[bend left=20] (2);
\draw[-,>=latex] (2) to[bend right=20] (3);
\draw[-,>=latex] (2) to[bend left=20] (3);
\draw[-,>=latex] (4) to[bend right=20] (5);
\draw[-,>=latex] (4) to[bend left=20] (5);
\draw[-,>=latex] (1) to[bend right=20] (5);
\draw[-,>=latex] (1) to[bend left=20] (5);


\node[] (a) at (4.25,0) {\dots};
\node[] (b) at (5.75,0) {\dots};
\node[] (a) at (4.25,-2) {A layout of $Z_n^3$};
\end{scope}

\end{tikzpicture}
\end{center}
    \caption{The set $\{Z_2^3, Z_3^3, \dots, Z_n^3,\dots \}\subseteq\obs_3$ forms an infinite antichain for the weak topological minor relation.}
    \label{layoutw4}
\end{figure}
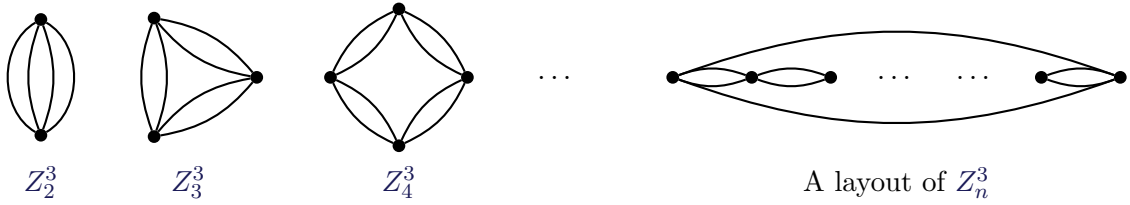
\begin{proof}
We define the graph $Z_n^3$ as the graph obtained from the cycle $C_n$ by duplicating every edge once (see Figure \ref{layoutw4}).
We observe that the set $\{Z_n^3\mid n\ge 2\}$ forms an infinite antichain with respect to $\wtm$ and that for every $n\ge 2$, $\etw(Z_n^3)=4$. First, as for each of these graphs every vertex has degree $4$, they are minimal for the weak topological minor relation. Second, let ${\sf L}$ be any layout of $Z_n^3$ (for $n\ge 2$). We remark that for $u$, the last vertex in ${\sf L}$, we have $\delta^{\sf ce}_{G,{\sf L}}(u)=4$. 
Finally, observe that if ${\sf L}=\langle x_1,\dots, x_n\rangle$ is a layout of $Z_3^n$ such that for every $i\in[2,n]$, $x_i$ is adajcent to $x_{i-1}$, then $\lwidth(G,{\sf L})=4$. It follows that $\{Z_n^3\mid n\ge 2\}\subseteq \mathcal{Z}_3$.
\end{proof}

%
%
%

\section{A  parametric equivalence}
\label{nioklop}

Our next step is to show that edge-treewidth can be parametrically expressed using the maximum edge-degree parameter.
For this we will define a new parameter, using the edge-degree as basic ingredient, and we will prove its parameteric equivalence with edge-treewidth.

\begin{lemma} \label{lem_biconnected_rooted_layout}
Let $G=(V,E)$ be a biconnected graph. For every vertex $u$ of $G$, we have $\etw(G,u) \le \etw(G)^2+2\cdot\etw(G)$, where $\etw(G,u)=\max\{\lwidth(G,{\sf L}) \:|\: {\sf L} \in \mathcal{L}(G) \land {\sf L}{(1)}= u\}$. 
\end{lemma}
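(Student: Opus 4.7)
The plan is to take an optimal layout ${\sf L}=\langle x_1,\dots,x_n\rangle$ of $G$ with $\lwidth(G,{\sf L})=k:=\etw(G)$, locate $u=x_j$ in it, and consider the layout ${\sf L}'=\langle u,x_1,\dots,x_{j-1},x_{j+1},\dots,x_n\rangle$ obtained by simply moving $u$ to the front. The task then reduces to certifying, position by position, that $\lwidth(G,{\sf L}')\le k^2+2k$.

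The crux is an auxiliary bound: for a biconnected graph $G$ with $\etw(G)=k$, every vertex $v$ has edge-degree at most $k^2+k$. To prove it, I would fix an optimal tree-layout $\T=(T,r,\tau)$ of $G$, let $t_v=\tau(v)$, and split the edges incident to $v$ into ``upward'' ones (those going into $V(G)\setminus X_{\T}(t_v)$) and ``downward'' ones (those going into the subtrees $X_{\T}(c_1),\dots,X_{\T}(c_s)$ rooted at the children $c_1,\dots,c_s$ of $t_v$). The upward edges lie in $E_G(X_{\T}(t_v))$, so they number at most $k$. For the downward ones, the tree-layout property implies that there are no $G$-edges between two distinct sibling subtrees $X_{\T}(c_j)$ and $X_{\T}(c_{j'})$. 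Since $G$ is biconnected, $G-v$ remains connected, so each $X_{\T}(c_j)$ must contribute at least one edge to $V(G)\setminus X_{\T}(t_v)$; all these ``escape'' edges are counted in $E_G(X_{\T}(t_v))$, forcing $s\le k$. As each subtree absorbs at most $|E_G(X_{\T}(c_j))|\le k$ downward edges from $v$, the downward total is at most $k^2$, yielding $\edeg_G(v)\le k^2+k$.

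Equipped with this bound, the cost of ${\sf L}'$ is bounded position by position. At position $1$, the component of $u$ in $G$ equals $V(G)$, giving cost $0$. At position $2$, biconnectedness makes the component of $x_1$ in $G-u$ equal to $V(G)\setminus\{u\}$, so the cost equals $\edeg_G(u)\le k^2+k$. For $3\le p\le j$, let $C=C_G(S_{p-1},x_{p-1})$ be the component from the original layout (with $|E_G(C)|\le k$) and let $C'=C_G(S_{p-1}\setminus\{u\},x_{p-1})$ be the corresponding new component; then $C'\subseteq C$. The other pieces of $G[C\setminus\{u\}]$ are pairwise disconnected from $C'$ in $G$ (being separate components of $G[C\setminus\{u\}]$), so the only new boundary edges of $C'$ beyond those in $E_G(C)$ are the edges from $C'$ to $u$, numbering at most $\edeg_G(u)\le k^2+k$. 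Hence the cost at position $p$ is bounded by $k+k^2+k=k^2+2k$. For $p\ge j+1$, $S'_p$ agrees with the original $S_p$, so the cost is unchanged and at most $k$.

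The main obstacle is the edge-degree bound itself: since $\edeg$ is not in general bounded by $\etw$, one must critically exploit the biconnectedness of $G$ together with the sibling-disjointness property of tree-layouts to constrain the branching degree $s$ at $t_v$. Once that is in place, the position-by-position verification of ${\sf L}'$ is essentially bookkeeping and yields the stated inequality.
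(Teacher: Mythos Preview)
Your proof is correct and follows essentially the same approach as the paper: move $u$ to the front of an optimal layout, observe that costs at later positions are unchanged while earlier ones increase by at most $\edeg_G(u)$, and then use biconnectedness to bound $\edeg_G(u)\le k^2+k$ by splitting the incident edges into at most $k$ ``upward'' edges and at most $k$ ``downward'' groups of size at most $k$ each. The only cosmetic difference is that you derive the degree bound from an optimal tree-layout (children subtrees playing the role of the groups) whereas the paper argues directly in the linear layout (connected components of $H-x_i$); since the two characterizations of $\etw$ are equivalent, the arguments are structurally identical.
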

\begin{proof}
 Let ${\sf L}=\langle x_1, \dots, x_n \rangle$ be a layout of $G$. Suppose that $u=x_i$. Consider the layout ${\sf L}' = \langle x_i, x_1, \dots, x_{i-1}, x_{i+1}, \dots x_n \rangle$. We  observe that for every vertex $x\in \{x_{i+1},\dots,x_{n}\}$, we have $\delta^{\sf ce}_{G,{\sf L}}(x)=\delta^{\sf ce}_{G,{\sf L}'}(x)$. However, for a vertex $x\in \{x_1,\dots, x_{i-1}\}$, we have that $\delta^{\sf ce}_{G,{\sf L}'}(x)\le \delta^{\sf ce}_{G,{\sf L}}(x)+\ell$, where $\ell$ is upper bounded by the degree of $x_i$ in $G$.
 
First observe that $|N(x_i)\cap \{x_1,\dots, x_{i-1}\}|$ is at most $\delta^{\sf ce}_{G,{\sf L}}(x_i)\le \lwidth(G,{\sf L})$. It remains to bound $|N(x_i)\cap S_{i+1}|$. Let $H$ be the subgraph of $G$ induced by $C_G(S_i,x_i)$. As $G$ is biconnected, every connected component of $H-x_i$ has at least one neighbor that appears prior to $x_i$ in ${\sf L}$. It follows that the number of connected components of $H-x_i$ is at most $\delta^{\sf ce}_{G,{\sf L}}(x_i)\le \lwidth(G,{\sf L})$. Let $C$ be one of these connected components and $x_C$ be the first vertex of $C$ in ${\sf L}$. Then observe that  $|N(x_i)\cap C|\le \delta^{\sf ce}_{G,{\sf L}}(x_C)\le \lwidth(G,{\sf L})$.
It follows that $|N(x_i)\cap S_{i+1}|\le \lwidth(G,{\sf L})^2$, and thereby $\ell\le  \lwidth(G,{\sf L})^2+ \lwidth(G,{\sf L})$, proving the result.
\end{proof}

\begin{theorem} \label{th_biconnected}
For every graph $G$, we have
\blue{${\etw}(G) \le \max\{{\etw}(B)^2+\etw(B) \:|\: B\in {\sf bc}(G)\}$.}
\end{theorem}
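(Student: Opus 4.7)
The plan is to assemble a tree layout of $G$ out of the tree layouts of its blocks, with each block rooted at the cut vertex it shares with its parent in the block tree, and then to bound the cost at every node of the resulting tree by what happens inside a single block.

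First I would assume without loss of generality that $G$ is connected, since $\etw$ and ${\sf bc}$ decompose over the connected components of $G$. Let $B_G$ denote the block tree of $G$, rooted at an arbitrary leaf block $B_r$. For every $B\in{\sf bc}(G)$ with $B\neq B_r$, let $x_B\in V(B)$ be the unique vertex of $B$ shared with the parent block of $B$ in the rooted $B_G$; let $x_{B_r}$ be an arbitrary vertex of $B_r$. For each block $B$ (trivially if $B$ is a bridge, and via Lemma~\ref{lem_biconnected_rooted_layout} otherwise), choose a linear layout $L_B$ of $B$ with $L_B(1)=x_B$ and $\lwidth(B,L_B)\le \etw(B)^2+2\etw(B)$, and convert it into a tree layout $\T_B=(T_B,r_B,\tau_B)$ of $B$ with $\tau_B(x_B)=r_B$ and ${\bf p}_{\lambda^{\sf e}}(B,\T_B)\le \lwidth(B,L_B)$, following the construction in the proof of the identity $\etw={\bf p}_{\lambda^{\sf e}}$.

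Next I would splice the per-block tree layouts along the cut vertices in top-down order over the rooted block tree. Initialize $(T,r,\tau):=(T_{B_r},r_{B_r},\tau_{B_r})$. When processing a block $B\neq B_r$ (so $\tau(x_B)$ is already defined), identify $r_B$ with the existing node $\tau(x_B)$ of $T$, attach the remainder of $T_B$ to $T$ as a subtree descending from that node, and set $\tau(v):=\tau_B(v)$ for every $v\in V(B)\setminus\{x_B\}$. The resulting triple $(T,r,\tau)$ is a tree layout of $G$: every edge $\{x,y\}\in E(G)$ lies inside a single block $B$, and the required ancestor-descendant relation between $\tau_B(x)$ and $\tau_B(y)$ is preserved once $T_B$ is embedded as a subtree of $T$.

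The crucial step is the cost analysis. Fix the convention that each tree node $u$ of $T$ is assigned to the unique block $B$ through which it was first inserted; this makes every node $u$ lie in exactly one $T_B$. The key fact, using that any two distinct blocks of $G$ meet in at most one vertex, is that every block $B'\neq B$ satisfies either $V(B')\subseteq X_\T(u)$ (if $T_{B'}$ is attached below $u$) or $V(B')\cap X_\T(u)=\emptyset$; in either case the edges of $B'$ contribute nothing to the boundary $E_G(X_\T(u))$. Consequently $|E_G(X_\T(u))|=|E_B(X_{\T_B}(u))|=\lambda^{\sf e}_{B,\T_B}(u)$, whence ${\bf p}_{\lambda^{\sf e}}(G,\T)\le \max\{{\bf p}_{\lambda^{\sf e}}(B,\T_B)\mid B\in{\sf bc}(G)\}$. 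Combined with the per-block bound and the identity $\etw={\bf p}_{\lambda^{\sf e}}$, this yields the stated inequality; the gap between the $+2\etw(B)$ produced by Lemma~\ref{lem_biconnected_rooted_layout} and the $+\etw(B)$ written in the statement is absorbed by a slightly sharper accounting of which positions of $L_B$ actually see the full degree of $x_B$. The principal obstacle is the bookkeeping at shared cut vertices: one has to choose a consistent rule assigning each cut-vertex node to a single block, and then verify that no block's contribution is double-counted or missed when $u$ is such a node and that the ancestor-descendant requirement of tree layouts survives the gluing in every configuration of blocks meeting at a common cut vertex.
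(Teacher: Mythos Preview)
Your proposal is correct and follows essentially the same approach as the paper: root the block tree, for each block $B$ use Lemma~\ref{lem_biconnected_rooted_layout} to obtain a layout of $B$ starting at the cut vertex $x_B$ with cost at most $\etw(B)^2+2\etw(B)$, glue these layouts along the block tree into a tree layout of $G$, and observe that the cost at any node is governed by a single block because distinct blocks meet in at most one vertex. The paper's proof is terser---it just refers back to the gluing construction from the cactus theorem---but the content is the same; note also that the paper's own proof concludes with $+2\etw(B)$ rather than the $+\etw(B)$ in the stated bound, so your ``sharper accounting'' remark is addressing what is simply a typo in the statement rather than a genuine gap you need to close.
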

\begin{proof}
We compute a tree-layout $\T=(T,r,\tau)$ from the block tree $B_G$ of $G$ as described in the proof of Theorem~\ref{th_cactus} ($2.\Rightarrow 1.$). This construction defines for every block $B$ of $G$ a root vertex $x_B$. Let   ${\sf L}_B$ denote the layout starting at $x_B$ used in the construction of $\T$. Clearly, we have 
$\ewidth(G,\T)\le \max\{\lwidth(B,{\sf L}_B) \:|\: B\in{\sf bc}(G)\}$. It follows from Lemma~\ref{lem_biconnected_rooted_layout} that $\ewidth(G,\T)\le \max\{{\etw}(B)^2+2\cdot\etw(B) \:|\: B\in {\sf bc}(G)\}$.
\end{proof}
%

\begin{theorem} \label{th_equivalence}
For every graph $G$, we have
\[{\etw}(G) \sim \max\big\{\{\Delta_{\sf e}(B) \:|\: B \in {\sf bc}(G)\} \cup \{ {\tw}(B) \:|\: B \in {\sf bc}(G)\}\big\}, \]
where $\Delta_{\sf e}(G)$ stands for the maximum edge-degree in $G$.
\end{theorem}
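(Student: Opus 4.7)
Let $\p(G) = \max\{\Delta_{\sf e}(B), \tw(B) : B \in {\sf bc}(G)\}$ denote the right-hand side. The plan is to prove the two inequalities $\p \lesssim \etw$ and $\etw \lesssim \p$ separately, using throughout the tree-layout characterization $\etw = \ewidth$ established earlier.

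The key auxiliary fact is that $\etw(B) \leq \etw(G)$ for every block $B$, despite $\etw$ not being subgraph-monotone in general. To prove this, I would fix a tree-layout $\T = (T, r, \tau)$ witnessing $\etw(G)$ and let $v^* \in V(B)$ minimize the depth of $\tau(v^*)$ among images of $V(B)$. Since $B$ is connected and every edge of $G$ forces its endpoints' images to lie on a common root-to-leaf path of $T$, a straightforward induction along paths in $B$ starting at $v^*$ (using the minimality of the depth of $\tau(v^*)$ to rule out ancestor jumps that escape $T_{v^*}$) shows that $\tau(V(B))$ is contained in $T_{v^*}$, the subtree of $T$ rooted at $\tau(v^*)$. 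Then $\T_B = (T_{v^*}, \tau(v^*), \tau|_{V(B)})$ is a valid tree-layout of $B$, and because $X_{\T_B}(u) = X_{\T}(u) \cap V(B)$ for every node $u$ of $T_{v^*}$, one has $E_B(X_{\T_B}(u)) \subseteq E_G(X_{\T}(u))$, hence $\ewidth(B, \T_B) \leq \ewidth(G, \T) = \etw(G)$.

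For $\p \lesssim \etw$: the pointwise inequality $|N_B(X_\T(u))| \leq |E_B(X_\T(u))|$ in any tree-layout immediately gives $\tw(B) \leq \etw(B) \leq \etw(G)$. For $\Delta_{\sf e}(B)$, bridges satisfy $\Delta_{\sf e}(B) = 1$, so assume $B$ is $2$-connected and let $u \in V(B)$ attain $\Delta_{\sf e}(B)$. In any layout ${\sf L}$ of $B$ starting at $u$, the second position has $C_B(S_2, x_2) = V(B) \setminus \{u\}$ (since $B - u$ is connected), so its cost equals $\edeg_B(u) = \Delta_{\sf e}(B)$. Combined with Lemma~\ref{lem_biconnected_rooted_layout}, this forces $\Delta_{\sf e}(B) \leq \etw(B)^2 + 2\etw(B) \leq \etw(G)^2 + 2\etw(G)$.

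For $\etw \lesssim \p$: I will show $\etw(B) \leq \tw(B) \cdot \Delta_{\sf e}(B)$ for each block $B$. Fix a tree-layout $\T$ of $B$ optimal for $\tw$, so $|N_B(X_\T(u))| \leq \tw(B)$ for every $u$; every edge of $E_B(X_\T(u))$ has an endpoint in $N_B(X_\T(u))$, and each such endpoint contributes at most $\Delta_{\sf e}(B)$ incident edges (counting multiplicities), yielding $|E_B(X_\T(u))| \leq \tw(B) \cdot \Delta_{\sf e}(B)$. Plugging this bound into Theorem~\ref{th_biconnected} gives $\etw(G) \leq \max_B \bigl((\tw(B)\cdot \Delta_{\sf e}(B))^2 + \tw(B)\cdot \Delta_{\sf e}(B)\bigr) \leq \p(G)^4 + \p(G)^2$, completing the equivalence. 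The main subtlety throughout is the key claim of paragraph two: since $\etw$ is not monotone under subgraphs, one must exploit the rigid structure that tree-layouts impose on connected subgraphs, via the minimum-depth argument.
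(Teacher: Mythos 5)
Your proposal is correct, and its upper-bound half (the pointwise bound $\etw(H)\le \tw(H)\cdot\Delta_{\sf e}(H)$ applied to blocks, then \autoref{th_biconnected}) is essentially the paper's argument. Where you genuinely diverge is the lower bound for the edge-degree term. The paper works directly on an optimal tree-layout of $G$: at the node $\tau(x_m)$ of a maximum-edge-degree vertex of the block $B_m$ it distinguishes whether the neighbours of $x_m$ in $B_m$ lie in one or in several child subtrees, and uses $2$-connectivity to route connections between different subtrees through ancestors of $\tau(x_m)$, which after balancing yields $\sqrt{\Delta_{\sf e}(B_m)}\le \etw(G)$. You instead first restrict an optimal tree-layout of $G$ to each block $B$ via the minimum-depth argument, giving $\etw(B)\le \etw(G)$, and then note that any layout of a $2$-connected $B$ starting at a maximum-edge-degree vertex has cost at least $\Delta_{\sf e}(B)$ at its second position, so \autoref{lem_biconnected_rooted_layout} forces $\Delta_{\sf e}(B)\le \etw(B)^2+2\etw(B)$. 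Both routes give a quadratic gap, hence the same parametric equivalence; yours recycles the rooted-layout lemma and avoids the child-subtree case analysis, while the paper's avoids any restriction step. Two remarks. First, your aside that $\etw$ is not subgraph-monotone is mistaken: a subgraph is a weak topological minor (take the empty set of contractions), so \autoref{th_weak_topological_minor} already yields $\etw(B)\le \etw(G)$ and could be cited instead of re-proved; your minimum-depth restriction argument is nonetheless correct (and shows the slightly stronger fact that an optimal tree-layout of $G$ induces one of every connected subgraph). Second, \autoref{lem_biconnected_rooted_layout} as printed says ``max'' where its proof establishes the bound for the minimum over layouts starting at $u$; you use it in that intended sense, which is the correct reading.
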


\begin{proof}
For a graph $G=(V,E)$, we define ${\bf p}(G) = \max\{\{\Delta_{\sf e}(B) \:|\: B \in {\sf bc}(G)\} \cup \{ {\tw}(B) \:|\: B \in {\sf bc}(G)\}\}$.

Let us first prove that ${\etw}(G) \le {\bf p}(G)^4+{\bf p}(G)^2$. It is known that $\tw={\bf p}_{\delta^{\sf vc}}$~\cite{DendrisKT97fugi} and we defined $\etw={\bf p}_{δ^{\sf ec}}$. It follows from the definition of ${\bf p}_{\delta^{\sf vc}}$ and ${\bf p}_{δ^{\sf ec}}$ that, for every graph $H$, $\etw(H)\le {\tw}(H)\cdot \Delta_{\sf e}(H)$. Applying this inequalities to the blocks of $G$, \autoref{th_biconnected} implies that $\etw(G)\le \max\{{\tw}(B)^2\cdot\Delta_{\sf e}(B)^2+2\cdot\tw(B)\cdot\Delta_{\sf e}(B) \:|\: B\in {\sf bc}(G)\}$, proving the upper bound.

Let us now prove that ${\bf p}(G)\le \etw(G)$. It is trivially the case if ${\bf p}(G)=\max\{{\tw}(B) \:|\: B \in {\sf bc}(G)\}$. Indeed, we then have ${\bf p}(G)\le \tw(G)\le \etw(G)$. So assume that ${\bf p}(G)=\max\{\Delta_{\sf e}(B) \:|\: B \in {\sf bc}(G)\}$. Let $\T=(T,r,\tau)$ be a tree layout of $G$ such that $\etw(G)=\ewidth(G,\T)$. Let $B_m$ be the block of $G$ such that $\Delta_{\sf e}(B_m)={\bf p}(G)$. Let $x_m$ be a vertex of $B_m$ with maximum edge-degree in $B$ and $N$ be its set of neighbors in $B$. 
We have different cases to consider depending on the position of node $u_m=\tau(x_m)$ in $\T$.
\begin{itemize}

\item Suppose that there exists in $T$ at most one child $v$ of the node $u_m$ such that $X_{\T}(v)\cap N\neq\emptyset$. 
Observe that we have  $\ewidth(G,\T,v)\ge |X_{\T}(v)\cap N|$  and $\ewidth(G,\T,u_m)\ge \Delta_{\sf e}(B_m)-|X_{\T}(v)\cap N|$. This implies that $\Delta_{\sf e}(B_m)/2={\bf p}(G)/2\le \etw(G)$.

\item Let $v_1,\dots v_{\ell}$, with $\ell>1$, be the children of $u_m$ in $T$ such that for every $i\in[\ell]$, $X_{\T}(v_i)\cap N\neq\emptyset$. Let $n_i$ be the number of vertices of $N$ in $X_{\T}(v_i)$. Observe that for every $i\in\ell$, we have $\ewidth(G,\T,v_i)\ge n_i$. Moreover as $B_m$ is a block, $x_m$ is not a cut vertex of $B_m$. Thereby for every pair $i$, $j$, $i\neq j$, there must exist a path from every vertex $x_i\in X_{\T}(v_i)\cap N$ to every vertex of $x_j\in X_{\T}(v_j)\cap N$ avoiding $x_m$. As $\T$ is a tree layout of $G$, such a path contains a vertex $z$ such that $\tau(z)$ is an ancestor of $u_m$. It follows that $\ewidth(G,\T,u_m)\ge \ell+\Delta_{\sf e}(B_m)-\sum_{i\in[l]} n_i$. In other words, we have that $\ewidth(G,\T)\ge \max \{n_1,\dots, n_{\ell}, \ell+\Delta_{\sf e}(B_m)-\sum_{i\in[l]} n_i\}$.
We observe that $\max \{n_1,\dots, n_{\ell}, \ell+\Delta_{\sf e}(B_m)-\sum_{i\in[l]} n_i\}\ge \sqrt{\Delta_{\sf e}(B_m)}$. This lower bound is attained when $\ell=\sqrt{\Delta_{\sf e}(B_m)}-1$ and for every $i\in[\ell]$, $n_i=\sqrt{\Delta_{\sf e}(B_m)}$. This implies that $\sqrt{{\bf p}(G)}\le \etw(G)$.
\end{itemize}
So we proved that 
$$\sqrt{{\bf p}(G)} \le \etw(G)\le {\bf p}(G)^4+2\cdot{\bf p}(G)^2.\vspace{-7mm}$$
\end{proof}
\medskip

An algorithmic consequence of \autoref{th_equivalence} and the linear FPT 5-approximation  algorithm for treewidth in~\cite{BodlaenderDDFLP13Anoca} is the following.

\begin{corollary}
\label{iocolorli}
One can construct a linear algorithm that, given a graph $G$ 
and a non-negative integer $k$, either returns a layout of $G$ certifying that 
$\etw(G)=O(k^{4})$ or reports that $\etw(G)>k$. 
\end{corollary}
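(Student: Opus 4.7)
The plan is to combine \autoref{th_equivalence} with the linear-time FPT 5-approximation for treewidth of~\cite{BodlaenderDDFLP13Anoca}, which, given $(H,t)$, in time $2^{O(t)}|H|$ either reports $\tw(H)>t$ or outputs a tree decomposition of $H$ of width at most $5t+4$.

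First, by \autoref{th_equivalence}, $\etw(G)\le k$ forces $\p(G)\le\etw(G)^{2}\le k^{2}$, so every block $B$ of $G$ satisfies both $\Delta_{\sf e}(B)\le k^{2}$ and $\tw(B)\le k^{2}$. The algorithm I would design first computes the block-cut tree of $G$ in linear time and then, for each block $B$, checks $\Delta_{\sf e}(B)\le k^{2}$ and runs the above 5-approximation on $B$ with parameter $k^{2}$. If either test fails for some block, then $\p(G)>k^{2}$, whence $\etw(G)\ge\sqrt{\p(G)}>k$ by \autoref{th_equivalence}, and the algorithm may safely report $\etw(G)>k$. Otherwise, for every block $B$ we have in hand a tree decomposition of width at most $5k^{2}+4$, and the overall cost of this phase is $\sum_{B}2^{O(k^{2})}|B|=O_{k}(|G|)$.

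Next, I would construct a tree-layout of $G$ from these ingredients. For each block $B$, designate the root vertex $x_{B}$ to be the cut vertex linking $B$ to its parent in the block-cut tree (arbitrary for the root block); re-root $B$'s tree decomposition at a bag containing $x_{B}$ (which is free and does not change its width) and convert it, via the standard tree-layout construction underlying $\tw=\mathbf{p}_{\lambda^{\sf v}}$, into a tree-layout $(T_{B},r_{B},\tau_{B})$ of $B$ with $\tau_{B}(x_{B})=r_{B}$ and $\lambda^{\sf v}$-width at most $5k^{2}+4$. Using $\lambda^{\sf e}(u)\le\lambda^{\sf v}(u)\cdot\Delta_{\sf e}(B)$ (each boundary vertex of $X_{\T_{B}}(u)$ contributes at most $\Delta_{\sf e}(B)$ boundary edges), the same tree-layout has $\lambda^{\sf e}$-width at most $(5k^{2}+4)\cdot k^{2}=O(k^{4})$. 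Finally, glue the per-block tree-layouts together by identifying, for every cut vertex $v$, the root of each child block's tree-layout with the node representing $v$ in its parent block's tree-layout. Because distinct blocks of $G$ are edge-disjoint, the $\lambda^{\sf e}$-value at every node of the resulting tree equals its $\lambda^{\sf e}$-value in its own block's tree-layout, so the combined tree-layout certifies $\etw(G)=O(k^{4})$.

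The subtle point is producing, for each block $B$, a tree-layout rooted at the prescribed vertex $x_{B}$ without blowing up the width: for \emph{linear} layouts the corresponding rerooting is governed by \autoref{lem_biconnected_rooted_layout} and squares the width, which together with \autoref{th_biconnected} would only give an $O(k^{8})$ bound. With \emph{tree}-layouts, however, a tree decomposition can be re-rooted at any bag containing $x_{B}$ for free, which is precisely what yields the tighter $O(k^{4})$ bound claimed in the corollary.
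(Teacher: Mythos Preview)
Your argument is correct and is precisely the kind of construction the paper's one-line justification (``an algorithmic consequence of \autoref{th_equivalence} and the 5-approximation of~\cite{BodlaenderDDFLP13Anoca}'') points to: compute the blocks, reject if some block has $\Delta_{\sf e}>k^{2}$ or the treewidth approximation reports $\tw>k^{2}$, and otherwise assemble per-block tree-layouts along the block tree.

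The one point worth highlighting is exactly the one you flag at the end. The paper's proof of \autoref{th_equivalence} routes the upper bound through \autoref{lem_biconnected_rooted_layout} and \autoref{th_biconnected}, which re-root \emph{linear} layouts and therefore square the per-block cost; following that chain literally with $\p(G)\le O(k^{2})$ would yield only $O(k^{8})$. Your choice to work with tree-layouts instead, re-rooting each block's tree decomposition at a bag containing the designated cut vertex $x_B$ (which costs nothing in width), and then using $\lambda^{\sf e}(u)\le \lambda^{\sf v}(u)\cdot\Delta_{\sf e}(B)$, is what recovers the $O(k^{4})$ bound actually claimed in the corollary. Your verification that the glued tree-layout has $\lambda^{\sf e}$-cost equal to the per-block cost at every node (by edge-disjointness of blocks) is the tree-layout analogue of the ``clearly'' step in the proof of \autoref{th_biconnected}, and it is sound.
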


It remains an open question whether there is an {\sf FPT}-algorithm checking whether $\etw(G)\leq k$.

\section{Universal obstructions for edge-treewidth}
\label{opertlsok}

\paragraph{Walls, thetas, and fans.}\label{label_mistreatment}
Let  $k,r∈\Bbb{N}.$ The
\emph{$(k\times r)$-grid} is the
graph whose vertex set is $[k]\times[r]$ and two vertices $(i,j)$ and $(i',j')$ are adjacent if and only if $|i-i'|+|j-j'|=1.$
For $i\geq 1$, we use $Γ_{i}$ for the $(i\times i)$-grid.
An  \emph{$r$-wall}, for some integer $r≥ 2,$ is the graph, denoted by $W_{i}$, obtained from a
$(2 r\times r)$-grid
with vertices $(x,y)
	∈[2r]\times[r],$
after the removal of the
``vertical'' edges $\{(x,y),(x,y+1)\}$ for odd $x+y,$ and then the removal of
all vertices of degree one.

\begin{figure}[h]
\begin{center}
\begin{tikzpicture}[thick,scale=0.6]
\tikzstyle{sommet}=[circle, draw, fill=black, inner sep=0pt, minimum width=3.5pt]

\begin{scope}[]

\foreach \x in {-5,-3,,-1,1,3,5}{
\draw node[sommet] () at (\x,0){};
}
\foreach \x in {-5,-3,,-1,1,3,5}{
\draw node[sommet] () at (\x,5){};
}
\foreach \x in {-6,-5,-4,-3,-2,-1,0,1,2,3,4,5}{
\foreach \y in {1,2,3,4}
\draw node[sommet] () at (\x,\y){};
}

\foreach \y in {0,5}{
\draw (-5,\y) -- (5,\y);
}
\foreach \y in {1,2,3,4}{
\draw (-6,\y) -- (5,\y);
}
\foreach \x in {-5,-3,-1,1,3,5}{
\foreach \y in {0,2,4}{
\draw (\x,\y) -- (\x,\y+1);
}
}
\foreach \x in {-6,-4,-2,0,2,4}{
\foreach \y in {1,3}{
\draw (\x,\y) -- (\x,\y+1);
}
}
\end{scope}

\end{tikzpicture}
\end{center}
    \caption{The wall $W_{6\times 6}$.}
    \label{fig_wall}
\end{figure}
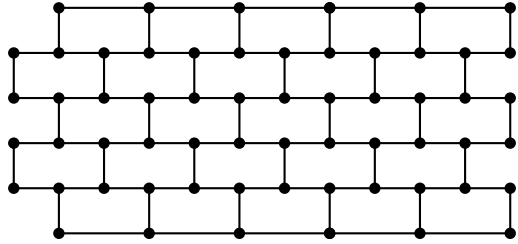

The $i$-\emph{theta}, denoted $\theta_i$, for $i\in\mathbb{N}_{\geq 1}$, is the graph composed by two vertices, called \emph{poles}, and $i$ parallel edges between them. The \emph{$i$-fan}, denoted $\varphi_i$ for $i\in\mathbb{N}_{\geq 1}$, is the graph obtained if we take a path $P_i$ on $i$ vertices plus a universal vertex, i.e., a new vertex adjacent with all the  vertices of the path $P_{i}$. The end vertices of the path are called \emph{extreme vertices} of $φ_{i}$.\medskip


\paragraph{Universal obstructions.}
By the grid minor theorem~\cite{RoberstonS84GMIII,RobertsonST94quick} (see also~\cite{Die05graph}), we know that
the treewidth is parametrically equivalent to the size of the largest wall contained as a topological minor.
That way, we can claim that the walls as a set of ``universal minor-obstructions'' for treewidth. Before proving a similar result for edge-treewidth and the weak topological minor relation, let us formalize the notion \emph{universal obstructions}.
\medskip

Let $\leq$ be some partial ordering relation.  Given a set of graphs ${\cal A}$ and a 
graph $G$, we say that ${\cal A}\leq G$ if $\exists H\in {\cal A}\ H\leq G$.
A {\em parameterized set of graphs} is a set ${\cal H}=\{H_{i}\mid i\in \Bbb{N}\}$ of graphs, indexed by non-negative integers.
Given an partial ordering relation $\leq$, we say that ${\cal H}$ is {\em $\leq$-monotone} if for every  $i\in\Bbb{N}$, $H_i\leq { H}_{i+1}$.  
Let $\mathcal{H}^1$ and $\mathcal{H}^2$ be two $\leq$-{monotone}  parameterized sets of graphs. We say that ${\cal H}^1\leq {\cal H}^2$ if there is a function $f:\Bbb{N}\to\Bbb{N}$ such that for every $i\in\Bbb{N}$, $H_{i}^1\leq H_{f(i)}^{2}$.
We say that ${\cal H}^{1}$ and ${\cal H}^{2}$ are \emph{$\leq$-equivalent} if  ${\cal H}^1\leq {\cal H}^2$  and  ${\cal H}^2\leq  {\cal H}^1$.
For instance, we may define the two parameterized sets of graphs ${\cal W}=\{W_{i}\mid i\in{\Bbb{N}}\}$ and ${\cal G}=\{Γ_{i}\mid i\in{\Bbb{N}}\}$
and observe that  ${\cal W}$ and ${\cal G}$ are $\leq_{\sf mn}$-equivalent. However, they are not ${\cal W}$ are $\leq_{\sf tp}$-equivalent
as  ${\cal W}\leq_{\sf tp}{\cal G}$, while ${\cal G}\not\leq_{\sf tp}{\cal W}$.

A set $\mathfrak{A}$ of  $\leq$-{monotone} parameterized sets of graphs is a \emph{$\leq$-antichain} if for every two distinct $\mathcal{H}^1\in\mathfrak{A}$ and $\mathcal{H}^2\in\mathfrak{A}$, neither   ${\cal H}^1\leq {\cal H}^2$  nor  ${\cal H}^2\leq {\cal H}^1$.
Given that $\frak{A}=\{{\cal H}^1,\ldots,{\cal H}^{r}\}$, we define the {\em $i$-th layer} of  $\frak{A}$, denoted by $\frak{A}_{i}$  as the set consisting of the $i$-th graph in each of the  parameterized set of graphs in $\frak{A}$, that is $\frak{A}_{i}=\{H_{i}^{1},\ldots,H_{r}^{i}\}$. 
We also define the graph parameter $\p_{\frak{A},\leq}$ so that $$\p_{\frak{A}}(G)=\max\{i\mid \frak{A}_{i}\leq G\}.$$

\begin{definition}
Let $\leq$ be a graph inclusion relation, $\mathfrak{A}$ be a $\leq$-antichain, and $\mathbf{p}$ be a $\leq$-closed parameter. We say that $\mathfrak{A}$ is a \emph{universal $\leq$-obstruction set} of $\mathbf{p}$ if $\mathbf{p}\sim \mathbf{p}_{\leq,\mathfrak{A}}$.
\end{definition}

For instance $\tw\sim\p_{\leq_{{\sf mn}},\{\cal W\}}\sim\p_{\leq_{\sf mn},\{\cal G\}}$, while $\tw\sim \p_{\leq_{\sf tp},\{\cal W\}}$.
Such type of results exist for all the parameters mentioned in the introduction.
Let ${\cal B}=\{B_{i}, i\in{\Bbb{N}}\}$ be the set of all complete binary trees of height $i$
and let ${\cal S}=\{S_{i}, i\in{\Bbb{N}}\}$  where $S_{i}$ is the {\em $i$-star} graph, that is the graph $K_{1,i}$.
It is also known that $\pw\sim\p_{\leq_{\sf mn},\{{\cal B}\}}$  \cite{RobertsonS83GMI} and that $\cw\sim\p_{\leq_{\sf mn},\{{\cal B},{\cal S}\}}$ \cite{KorachS93tree}. 
Also, according to the result of Wollan in~\cite{Wollan15thest}, $\tcw\sim\p_{\leq_{\sf im},\{{\cal W}\}}$.
Also for a universal obstruction for the parameter of {\sl tree partition width},  see~\cite{DingO96ontre}.\smallskip

In what follows, we  give a family of five parameterized graphs that can serve as a universal $\wtm$-obstruction for edge-treewidth (see \autoref{fig_universal_obstructions}). Given a graph $G=(V,E)$, the \emph{weak {subdivision}} of $G$, denoted $\dot{G}$, is the graph obtained from $G$ by {subdividing} once every edge $xy\in E$ such that both $x$ and $y$ have edge-degree at least $3$ (the {\em subdivision} of an edge is its replacement by a path of length two on the same endpoints). The $i$-fan is the graph composed by a path of length $i$ and a universal vertex.

\begin{enumerate}
\item $\mathcal{H}^1=\{\theta_{3+i}\mid i\in\mathbb{N}\}$, where $\theta_i$ is the graph on two vertices connected by $i$ multiple edges;
\item $\mathcal{H}^2=\{\dot\theta_{3+i}\mid i\in\mathbb{N}\}$;
\item $\mathcal{H}^3=\{\tilde\varphi_{3+i}\mid i\in\mathbb{N}\}$, where $\tilde\varphi_i$ is obtained from the $i$-fan by subdividing once every edge incident to two vertices of vertex-degree $3$;
\item $\mathcal{H}^4=\{\dot\varphi_{3+i}\mid i\in\mathbb{N}\}$, where $\dot\varphi_i$ is the weak subdivision of the $i$-fan;
\item $\mathcal{H}^5=\{\dot\W_{(3+i)\times(3+i)}\mid i\in\mathbb{N}\}$.
\end{enumerate}

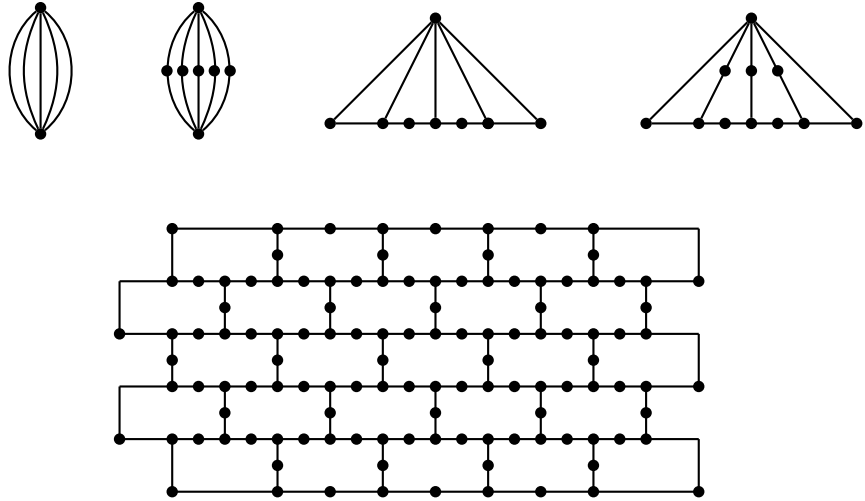
\begin{figure}[h]
\begin{center}
\begin{tikzpicture}[thick,scale=0.7]
\tikzstyle{sommet}=[circle, draw, fill=black, inner sep=0pt, minimum width=3.5pt]

\begin{scope}[xshift=-7.5cm]
\foreach \x/\y in {90/1,270/2}{
\draw node[sommet] (\y) at (\x:1.2){};
}

\draw[-,>=latex] (1) to[bend right=25] (2);
\draw[-,>=latex] (1) to[bend right=50] (2);
\draw[-,>=latex] (1) to (2);
\draw[-,>=latex] (1) to[bend left=25] (2);
\draw[-,>=latex] (1) to[bend left=50] (2);

\end{scope}

\begin{scope}[xshift=-4.5cm]
\foreach \x/\y in {90/1,270/2}{
\draw node[sommet] (\y) at (\x:1.2){};
}
\foreach \x/\y in {-0.6/4,-0.3/5,0/6,0.3/7,0.6/8}{
\draw node[sommet] (\y) at (\x,0){};
}

\draw[-,>=latex] (1) to[bend right=25] (2);
\draw[-,>=latex] (1) to[bend right=50] (2);
\draw[-,>=latex] (1) to (2);
\draw[-,>=latex] (1) to[bend left=25] (2);
\draw[-,>=latex] (1) to[bend left=50] (2);
\end{scope}

%
%
%
%
%
%

\begin{scope}[xshift=0cm,yshift=-1cm]
\foreach \x/\y in {-2/1,-1/2,0/3,1/4,2/5}{
\draw node[sommet] (\y) at (\x,0){};
}

\foreach \x/\y in {-0.5/12,0.5/13,}{
\draw node[sommet] (\y) at (\x,0){};
}

\draw node[sommet] (6) at (0,2){};

\foreach \z in {1,2,3,4,5}{
\draw[-,>=latex] (6) to (\z);
}
\draw[-,>=latex] (1) to (5);
\end{scope}

\begin{scope}[xshift=6cm,yshift=-1cm]
\foreach \x/\y in {-2/1,-1/2,0/3,1/4,2/5}{
\draw node[sommet] (\y) at (\x,0){};
}

\foreach \x/\y in {-0.5/7,0/8,0.5/9}{
\draw node[sommet] (\y) at (\x,1){};
}

\foreach \x/\y in {-0.5/12,0.5/13}{
\draw node[sommet] (\y) at (\x,0){};
}

\draw node[sommet] (16) at (0,2){};

\foreach \z in {1,2,3,4,5}{
\draw[-,>=latex] (16) to (\z);
}
\draw[-,>=latex] (1) to (5);

\end{scope}

\begin{scope}[xshift=0cm,yshift=-8cm]

\foreach \x in {-5,-3,-2,-1,0,1,2,3,5}{
\draw node[sommet] () at (\x,0){};
}
\foreach \x in {-5,-3,-2,-1,0,1,2,3}{
\draw node[sommet] () at (\x,5){};
}

\foreach \x in {-3,-1,1,3}{
\foreach \y in {0.5,4.5}{
\draw node[sommet] () at (\x,\y){};
}
}

\foreach \x in {-6,-5,-4,-3,-2,-1,0,1,2,3,4}{
\foreach \y in {1,3}
\draw node[sommet] () at (\x,\y){};
}

\foreach \x in {-4.5,-3.5,-2.5,-1.5,-0.5,0.5,1.5,2.5,3.5}{
\foreach \y in {1,2,3,4}
\draw node[sommet] () at (\x,\y){};
}

\foreach \x in {-5,-4,-3,-2,-1,0,1,2,3,4,5}{
\foreach \y in {2,4}
\draw node[sommet] () at (\x,\y){};
}

\foreach \x in {-4,-2,0,2,4}{
\foreach \y in {1.5,3.5}{
\draw node[sommet] () at (\x,\y){};
}
}

\foreach \x in {-5,-3,-1,1,3}{
\draw node[sommet] () at (\x,2.5){};
}

\foreach \y in {0,5}{
\draw (-5,\y) -- (5,\y);
}

\foreach \y in {1,2,3,4}{
\draw (-6,\y) -- (5,\y);
}

\foreach \x in {-5,-3,-1,1,3,5}{
\foreach \y in {0,2,4}{
\draw (\x,\y) -- (\x,\y+1);
}
}

\foreach \x in {-6,-4,-2,0,2,4}{
\foreach \y in {1,3}{
\draw (\x,\y) -- (\x,\y+1);
}
}
\end{scope}

\end{tikzpicture}
\end{center}
    \caption{From left to right in the top row, we have the graph $\theta_5$ and its weak subdivision $\dot\theta_5$, the graph $\tilde\varphi_5$ and the graph $\dot\varphi_5$. The weak subdivision of the $6\times 6$-wall is drawn at the bottom row.}
    \label{fig_universal_obstructions}
\end{figure}

\begin{theorem}
\label{paplriok}
The set $\mathfrak{A}=\{\mathcal{H}^j\mid 1\le j\le 5\}$ is a universal $\wtm$-obstruction set for edge-treewidth, that is $\etw\sim \mathbf{p}_{\wtm,\mathfrak{A}}$.
\end{theorem}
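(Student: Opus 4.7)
My strategy is to prove the two parametric inequalities underlying $\etw\sim\mathbf{p}_{\wtm,\mathfrak{A}}$ separately. The direction $\mathbf{p}_{\wtm,\mathfrak{A}}\lesssim\etw$ is immediate from the $\wtm$-closure of $\etw$ (\autoref{th_weak_topological_minor}) combined with \autoref{th_equivalence}: each family $\mathcal{H}^{j}$ is $\wtm$-monotone and its members are biconnected graphs that coincide with their unique block, so one only needs to verify that either the maximum edge-degree or the treewidth within the family tends to infinity. This is transparent, since $\theta_{i}$ and $\dot\theta_{i}$ possess a vertex of edge-degree $i$, $\tilde\varphi_{i}$ and $\dot\varphi_{i}$ possess a hub of edge-degree $i$, and $\dot{W}_{(3+i)\times(3+i)}$ has treewidth $\Omega(i)$.

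The interesting direction is $\etw\lesssim\mathbf{p}_{\wtm,\mathfrak{A}}$. I would first invoke \autoref{th_equivalence} to reduce the task to showing: whenever $G$ has a block $B$ with either $\tw(B)$ or $\Delta_{\sf e}(B)$ large, some graph in $\mathfrak{A}$ occurs as a $\wtm$ of $G$ at a correspondingly large index. The case of large $\tw(B)$ is handled by the classical grid/wall theorem, which supplies a subgraph $H\subseteq B$ that is a subdivision of a large wall $W_{s}$. Inside $H$ each subdivision vertex has vertex-degree and edge-degree both equal to $2$, so selected interior edges of each subdivision path can be contracted under $\wtm$. Keeping one subdivision vertex on every edge of $W_{s}$ that joins two wall vertices of degree $\geq 3$, and contracting the paths incident to the degree-$2$ corners down to single edges, reproduces exactly $\dot{W}_{s}\wtm G$.

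The case of large $\Delta_{\sf e}(B)$ is the delicate one. I pick $v\in V(B)$ with $\edeg_{B}(v)$ maximum. If some single neighbour $w$ is joined to $v$ by $\Omega(\sqrt{\Delta_{\sf e}(B)})$ parallel edges, then $\theta_{s}\subseteq B\subseteq G$ with $s=\Omega(\sqrt{\Delta_{\sf e}(B)})$. Otherwise $v$ has $\Omega(\sqrt{\Delta_{\sf e}(B)})$ distinct neighbours and I appeal to a \emph{theta-or-fan dichotomy} for $2$-connected graphs: if $v$ has large vertex-degree in a $2$-connected graph $B$, then either some $u\in V(B)\setminus\{v\}$ is joined to $v$ by many internally vertex-disjoint paths in $B$ (yielding $\dot\theta_{s}$ as a subgraph of $B$), or $B$ contains a subdivision of $\varphi_{s}$ with hub $v$ for large $s$. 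From a $\varphi_{s}$-subdivision inside $G$, $\wtm$-contractions along each subdivision path---necessarily stopping at length $2$ when an endpoint has edge-degree $\geq 3$, but allowed to reach length $1$ otherwise---produce either $\tilde\varphi_{s}$ or $\dot\varphi_{s}$ as a weak topological minor.

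The main obstacle is establishing the theta-or-fan dichotomy. My plan is to consider an open ear decomposition of $B$ starting from a cycle through $v$ and to track, across the decomposition, how ears attach at vertices of $N_{B}(v)\cup\{v\}$: either one attachment vertex $u$ accumulates many internally vertex-disjoint $v$-$u$ paths, delivering the theta alternative via Menger's theorem, or the ears chain along $B\setminus\{v\}$ into a single long path meeting a large subset of $N_{B}(v)$, delivering the fan. The reason for having \emph{both} $\theta_{i},\dot\theta_{i}$ and $\tilde\varphi_{i},\dot\varphi_{i}$ in $\mathfrak{A}$, rather than just one of each pair, is that $\wtm$ cannot subdivide: the residual lengths of the reduced paths are forced to be $1$ or $2$ depending on whether the corresponding internal vertices of $G$ already have edge-degree $\geq 3$, so each combination must appear in the antichain.
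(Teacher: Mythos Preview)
Your global plan coincides with the paper's: both directions pass through \autoref{th_equivalence}, the large-$\tw$ block is handled by the wall theorem, and the large-$\Delta_{\sf e}$ block is the crux. Two genuine gaps remain, and both sit exactly where your approach diverges from the paper's.

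The first gap is the assertion that from a $\varphi_{s}$-subdivision, $\wtm$-contractions ``produce either $\tilde\varphi_{s}$ or $\dot\varphi_{s}$''. This is false as stated. A subdivided edge between two branch vertices of degree~$\geq 3$ reduces under $\wtm$ only down to length~$2$, while an \emph{un}subdivided such edge stays at length~$1$; $\wtm$ never lengthens. A generic fan subdivision may therefore have some internal spokes of residual length~$1$ and others of residual length~$2$, yielding a hybrid that is neither $\tilde\varphi_{s}$ (all spokes length~$1$) nor $\dot\varphi_{s}$ (all internal spokes length~$2$). You would need an additional pigeonhole step restricting to a sub-fan with uniform spoke type, and then separately ensure that the rim segments between the selected spokes have length~$\geq 2$. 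The same issue arises for your $\dot\theta_{s}$ step: many internally disjoint $v$--$u$ paths may include direct edges, giving a $\theta$/$\dot\theta$ mixture.

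The second gap is the ``theta-or-fan dichotomy'' itself. Your ear-decomposition sketch (``either one attachment vertex accumulates many disjoint $v$-paths, or the ears chain into a long path'') is not how open ear decompositions behave; ears attach at arbitrary pairs of earlier vertices, and nothing forces this alternative. The paper proves the dichotomy by a different and concrete route that simultaneously repairs the first gap: take a spanning tree $T'$ of $B\setminus\{x\}$ covering $N_{B}(x)$, dissolve degree-$2$ non-neighbours of $x$ to obtain $T$, and note that a vertex of degree~$\geq k$ in $T$ yields $k$ internally disjoint $x$-paths each of length~$\geq 2$, hence $\dot\theta_{k}\wtm G$ with the correct residual lengths. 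Otherwise $T$ has bounded degree, so $|V(T)|\leq \Delta_{\sf v}(T)^{\lceil{\sf diam}(T)/2\rceil}$ forces a long path $P$; taking every second vertex of $V(P)\cap N(x)$ (resp.\ of $V(P)\setminus N(x)$, which are branching vertices of $T$) produces a fan whose spokes are \emph{by construction} all of length~$1$ (giving $\tilde\varphi_{k}$) or all of length~$\geq 2$ (giving $\dot\varphi_{k}$). Finally, you omit the verification that $\mathfrak{A}$ is a $\wtm$-antichain of $\wtm$-monotone families, which the paper's definition of universal obstruction set requires and to which the paper devotes two separate claims.
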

\begin{proof}
We first prove that every $\mathcal{H}^j$ is a $\wtm$-monotone parameterized set of graphs (\autoref{cl_monotone}). Then we show that $\mathfrak{A}$ is a $\wtm$-antichain (\autoref{cl_antichain}). Finally, we establish the parametric equivalence between $\etw$ and $\mathbf{p}_{\mathfrak{A}}$ (\autoref{cl_equiv_f} and \autoref{cl_equiv_b}).

\smallskip
\begin{claim} \label{cl_monotone}
For every  $j$, $1\le j\le 5$, $\mathcal{H}^j$ is a $\wtm$-monotone parameterized set of graphs. 
\end{claim}
\noindent
\textit{Proof of claim.} 
Observe that
$\theta_i$ is obtained from $\theta_{i+1}$ by deleting one edge and that $\dot\theta_i$ is obtained from $\dot\theta_{i+1}$ by deleting one vertex of vertex-degree $2$. We note that every graph $H^j_{i+1}$ ($j\in\{3,4\}$) in one of the two fan families $\mathcal{H}^3$, $\mathcal{H}^4$ has an induced path $P$ of at least $4+i$ vertices. Let $x$ be one of the two extreme vertices of that path. To obtain $H^j_i$ from $H^j_{i+1}$, we first delete $x$. If this generate edges between vertices of degree two in the resulting graph, then we contract these (one or two) edges.
{One can easily check that, in the weak subdivision $\dot{W}_{4+i}$ of the wall, performing vertex deletions and edges contraction between degree-two vertices along the top row and the left-most column, we can obtain  $\dot{W}_{(3+i)\times(3+i)}$.}  \hfill \qedclaim

\smallskip
\begin{claim}  \label{cl_antichain}
$\mathfrak{A}=\{\mathcal{H}^j\mid 1\le j\le 5\}$ is a $\wtm$-antichain. 
\end{claim}
\noindent
\textit{Proof of claim.} 
\begin{enumerate}
\item Consider $j\in [1,4]$. It is well-known that the treewidth of any subdivision of $W_{i\times i}$ is $i$. Observe that for every $i\in\mathbb{N}$, $\tw(H^j_i)=2$. As $G\wtm H$ implies that $\tw(G)\le \tw(H)$, we have that $\mathcal{H}^j\not\wtm \mathcal{H}^5$. Moreover, if  $G\wtm H$, then $\Delta_{\sf e}(G)\le \Delta_{\sf e}(H)$. This implies that $\mathcal{H}^5\not\wtm \mathcal{H}^j$.

\item We observe that a weak topological minor of $\theta_i$ is either the single vertex graph or a theta graph $\theta_{i'}$ for some $i'<i$. It follows that for every $j\in\{2,3,4\}$, $\mathcal{H}^j\not\wtm\mathcal{H}^1$.

\item We observe that every weak topological minor of $\dot\theta_i$ that is biconnected is either the single vertex graph or $\dot\theta_{i'}$ for some $i'<i$. It follows that for every $j\in\{1,3,4\}$, $\mathcal{H}^j\not\wtm\mathcal{H}^2$.

\item We observe that for $j\in\{3,4\}$, every weak topological minor of $H^j_i$ that is biconnected is either the $K_3$, the fan $\varphi_3$ or contains an induced path on at least $5$ vertices. It follows that for every $j'\in\{1,2\}$, $\mathcal{H}^{j'}\not\wtm\mathcal{H}^j$.

\item Finally, we observe that every weak topological minor of $\tilde\varphi_j\in\mathcal{H}^4$ that is biconnected contains a vertex whose neighbors belong to a common path. It follows that $\mathcal{H}^4\not\wtm\mathcal{H}^3$. We also observe that  every biconnected weak topological minor of $\dot\varphi_j$ that is not $K_3$ nor the fan $\varphi_3$, contains a vertex of vertex-degree three whose neighbors all have vertex-degree two. This implies that $\mathcal{H}^3\not\wtm\mathcal{H}^4$. 
\hfill \qedclaim
\end{enumerate}

\smallskip
\begin{claim}  \label{cl_equiv_f}
For every $j$, $1\le j\le 5$ and for every $i\in\mathbb{N}$, $\etw(H_i^j)\ge  \sqrt{i}$.
\end{claim}
\noindent
\textit{Proof of claim.} To see this, observe that for every $j\in[1,5]$, the parameterized set of graphs $\mathcal{H}^j$ contains only biconnected graphs. Moreover for every graph $H^j_i\in\mathcal{H}^j$, we have $\max\{\Delta_{\sf e}(H^j_i),\tw(H^j_i)\}\ge i$. It follows from~\autoref{th_equivalence} that $\etw(H^j_i)\ge\sqrt{i}$.
\hfill \qedclaim

\smallskip
\begin{claim}  \label{cl_equiv_b}
There exists a function $f:\mathbb{N}\rightarrow\mathbb{N}$ such that if for every $j\in[1,5]$, $H^j_{k}\not\wtm G$, then $\etw(G)\leq f(k)$.
\end{claim}
\noindent
\textit{Proof of claim.} 
By the parametric equivalence between $\etw(G)$ and $\mathbf{p}(G)=\max\big\{\{\Delta_{\sf e}(B) \:|\: B \in {\sf bc}(G)\} \cup \{ {\tw}(B) \:|\: B \in {\sf bc}(G)\}\big\}$ (see \autoref{th_equivalence}), it suffices to prove that there exists a function $f:\mathbb{N}\rightarrow\mathbb{N}$ such that if for every $j\in[1,5]$, $H^j_{k}\not\wtm G$, then $\mathbf{p}(G)\leq f(k)$. Let $B$  be the block of $G$ such that $\mathbf{p}(G)=\max\{\Delta_{\sf e}(B),\tw(B)\}$. We have two cased to consider:
\begin{itemize}
\item Suppose first that $\mathbf{p}(G)=\tw(B)$. Observe that by the wall theorem (see~\cite{RobertsonS86GMV,ChekuryC16polyn}), if $B$ contains a subdivision of $W_{k\times k}$ as a subgraph, then $\tw(G)\ge k^{\O(1)}$.

\item So let us consider the case where $\mathbf{p}(G)=\Delta_{\sf e}(B)$. Let $x$ be a vertex of $B$ such that $\edeg_{B}(x)=\Delta_{\sf e}(B)$ and suppose that $\edeg_{G}(x)\geq k\cdot k^{2k-3}$. As by assumption, $\theta_k\not\wtm G$, we have that $\vdeg_{B}(x)\ge k^{2k-3}$. Let $T'$ be a (simple) subtree of $B\setminus\{x\}$ spanning all the neighbors of $x$ in $B$. We consider the tree $T$ obtained from $T'$ by {disolving} every vertex of vertex-degree two that is not a neighbor of $x$ in $G$. It follows that every vertex of $T$ is a neighbor of $x$ or a branching vertex. Observe that every vertex $y$ in $T$ has vertex-degree $\vdeg_{T}(y)< k$, as otherwise we would have $\dot\theta_k\wtm G$. It follows that $T$ is a tree spanning at least $k^{2k-3}$ vertices and its maximum vertex-degree $\Delta_{\sf v}(T)$ is at most $k$. Moreover, the number of vertices in $T$ is at most $(\Delta_{\sf v}(T))^{{\lceil\sf diam}(T)/2\rceil}$, where ${\sf diam}(T)$ is the diameter of $T$. This implies that $T$ contains a path $P$ of length at least $4k-6$. Observe that the vertex set $V(P)$ of $P$ is partitioned into $P_x=V(P)\cap N(x)$ and $P_{\overline x}=V(P)\cap\overline{N}(x)$.  Suppose that $|P_x|\geq |P_{\overline{x}}|$. Let $Y$ be the subset of vertices of $P$ containing every odd neighbor of $x$ in $P$. We have that $|Y|\geq k$. Let $H$ be the graph with vertex set $V(P)\cup\{x\}$ and containing the edges of $P$ and the edges between $x$ and the vertices of $Y$. We observe that, by construction of $T$ and definition of $Y$, $\tilde\varphi_k$ is a weak topological minor of $H$ and of $B$: contradiction. 
So let us assume that $|P_x|< |P_{\overline x}|$. Observe that, by construction of $T$, every vertex $y\in V(P)\cap\overline{N}(x)$ is a branching vertex. Thereby for every such vertex $y$, $T$ contains a path $P_y$ of length at least $2$ between $x$ and $y$. Let $Y$ be the subset of vertices of $P$ containing every odd vertex of $V(P)\cap\overline{N}(x)$. We have that $|Y|\geq k$. 
Let $H$ be the graph containing the vertex $x$, the path $P$ and every path $P_y$ between $x$ and $y\in Y$. Again, we observe that, by construction of $T$ and definition of $Y$, $\dot\varphi_k$ is a weak topological minor of $H$ and of $B$: contradiction.
\hfill \qedclaim
\end{itemize}

As $\etw$ is closed under weak topological minor (see~\autoref{th_weak_topological_minor}), by \autoref{cl_equiv_f}, we have that $\mathbf{p}_{\mathfrak{A}}(G)\le \etw(G)^2$. By \autoref{cl_equiv_b}, we have that $\etw(G)\leq \mathbf{p}_{\wtm,\mathfrak{A}}(G)^{4\cdot\mathbf{p}_{\wtm,\mathfrak{A}}(G)+1}$. It follows that $\etw\sim\mathbf{p}_{\wtm,\mathfrak{A}}$.
\end{proof}

Let us discuss the exponential gap obtained above between $\etw$ and $\mathbf{p}_{\mathfrak{A}}$. Consider the graph family $\mathcal{G}=\{G_{3+i}\mid k\in\mathbb{N}\}$ where $G_{3+i}$ is defined as follows (see \autoref{fig_tight_obstruction}). Let $T$ be a tree of diameter $2i+1$ such that every internal node has vertex-degree $i+2$. Then $G_{3+i}$ is obtained by adding a universal vertex $x$ to the leaves of $T$, and then by replacing every edge with $i+2$ multiple edges. 

\begin{figure}[h]
\begin{center}
\begin{tikzpicture}[thick,scale=0.7]
\tikzstyle{sommet}=[circle, draw, fill=black, inner sep=0pt, minimum width=3.5pt]

\begin{scope}[xshift=-3.4cm,yshift=0cm]
\draw node[sommet] (20) at (1.6,2){};
\foreach \x\y in {0.4/10,1.6/11,2.8/12}{
\draw node[sommet] (\y) at (\x,1){};
\draw[-] (20) to (\y);
}
\foreach \x/\y in {0/1,0.4/2,0.8/3,1.2/4,1.6/5,2/6,2.4/7,2.8/8,3.2/9}{
\draw node[sommet] (\y) at (\x,0){};
}
\foreach \y in {0,0.4,0.8}{
\draw[-] (10) to (\y,0) ;
}
\foreach \y in {1.2,1.6,2}{
\draw[-] (11) to (\y,0) ;
}
\foreach \y in {2.4,2.8,3.2}{
\draw[-] (12) to (\y,0) ;
}
\end{scope}

\begin{scope}[xshift=0.2cm,yshift=0cm]
\draw node[sommet] (20) at (1.6,2){};
\foreach \x\y in {0.4/10,1.6/11,2.8/12}{
\draw node[sommet] (\y) at (\x,1){};
\draw[-] (20) to (\y);
}
\foreach \x/\y in {0/1,0.4/2,0.8/3,1.2/4,1.6/5,2/6,2.4/7,2.8/8,3.2/9}{
\draw node[sommet] (\y) at (\x,0){};
}
\foreach \y in {0,0.4,0.8}{
\draw[-] (10) to (\y,0) ;
}
\foreach \y in {1.2,1.6,2}{
\draw[-] (11) to (\y,0) ;
}
\foreach \y in {2.4,2.8,3.2}{
\draw[-] (12) to (\y,0) ;
}
\end{scope}

\draw[-] (20) to (-1.8,2) ;
\draw[-] (20) to (1.8,2) ;

\draw node[sommet] (30) at (0,-2){};
\foreach \x in {-3.4,-3,...,3.4}{
\draw (30) to (\x,0);
}

\node[below] (a) at (0,-2) {$x$};

\end{tikzpicture}
\end{center}
    \caption{The graph $G_5$ is obtained from the above graph by replacing every edge with $4$ multiple edges.}
    \label{fig_tight_obstruction}
\end{figure}
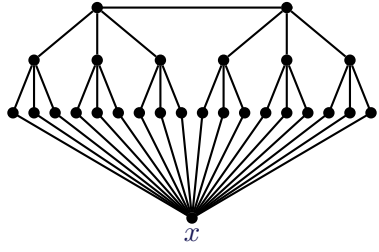

\begin{observation} \label{obs_tight_bound}
For every $k\in\mathbb{N}$, the graph $G_{3+i}$ does not contain as weak topological minor any of $\theta_{3+i}$, $\dot\theta_{3+i}$, $\tilde{\varphi}_{3+i}$, $\dot\varphi_{3+i}$ and $\dot\W_{(3+i)\times(3+i)}$.
\end{observation}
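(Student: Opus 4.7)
The plan is to handle each of the five families separately, using a distinct structural invariant for each. Throughout let $k=3+i$, and note the following properties of $G_{3+i}$: every multi-edge has multiplicity exactly $k-1$, the tree $T$ has diameter $2i+1=2k-5$, every vertex other than $x$ has vertex-degree at most $i+2=k-1$, and $x$ is adjacent only to the leaves of $T$.

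For $\theta_k$ and $\dot\theta_k$, the key invariant is edge-multiplicity or local connectivity. In any WTM-embedding of $\theta_k$ with $k\ge 3$, the two poles have edge-degree $k$ and vertex-degree $1$, so they can never be involved in a contraction (which requires $\vdeg=\edeg=2$ at both endpoints). Consequently the $k$ parallel edges between them must already be present as a direct multi-edge in the host; since the maximum multiplicity in $G_{3+i}$ is only $k-1$, this is impossible. For $\dot\theta_k$, the existence of $k$ internally vertex-disjoint paths of length $\ge 2$ between two vertices yields, by Menger, a local vertex-connectivity of at least $k$. A short case check over the pairs $(u,v)\in\{(x,\text{leaf}),(x,\text{internal}),(\text{non-}x,\text{non-}x)\}$ shows that the vertex-degree bound $k-1$ on non-$x$ vertices (and the mere two neighbors available at a leaf) caps this connectivity at $k-1$ in every case.

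The heart of the argument treats $\tilde\varphi_k$ and $\dot\varphi_k$ together, as both have the same spine: the simple path $p_1-p_2-s_{2,3}-p_3-s_{3,4}-\cdots-s_{k-2,k-1}-p_{k-1}-p_k$ of length exactly $2k-4$. The only vertex of $G_{3+i}$ with vertex-degree at least $k$ is $x$, so any WTM-embedding must send the hub $u$ to $x$. No path of the embedding representing a spine edge can pass through $x$: as an internal subdivision vertex $x$ would need $\vdeg=\edeg=2$ in the embedded subgraph $H^*$, contradicting $\vdeg_{H^*}(x)\ge k$, and $x$ cannot be the image of a spine vertex by injectivity of the embedding. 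Consequently the concatenation of the $2k-4$ spine paths is a simple path in $G_{3+i}-x=T$ of length at least $2k-4=2i+2$. Since $T$ has diameter $2i+1$, no simple path in $T$ has length exceeding $2i+1<2i+2$, giving the desired contradiction.

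For $\dot W_{(3+i)\times(3+i)}$ the argument is via treewidth: the weak topological minor relation is a minor relation and therefore preserves treewidth. The tree decomposition of $G_{3+i}$ whose bags are $\{x,u,v\}$, one for each edge $\{u,v\}$ of $T$ and joined according to $T$'s own tree structure, has width $2$, giving $\tw(G_{3+i})\le 2$. On the other hand, $\dot W_{k\times k}$ is a subdivision of $W_{k\times k}$, hence has the same treewidth, which is at least $k\ge 3$. The main subtlety in writing out the full proof will be the spine argument for the fan obstructions: one must unpack the WTM definition to verify that both the permanent spine-vertex images and any temporary subdivision vertices of the spine paths are pairwise distinct and distinct from $x$, so that the concatenated spine path is genuinely a simple path in $T$; once this is established, the length-versus-diameter comparison immediately yields the contradiction.
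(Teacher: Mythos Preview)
Your proof is correct and uses the same four invariants as the paper—edge multiplicity for $\theta_k$, vertex-degree for $\dot\theta_k$ (your Menger case-check boils down to the paper's one-line observation that both poles must have vertex-degree at least $k$ while only $x$ does), a diameter comparison for the two fan families, and treewidth for the wall. Your treatment of the fans is in fact more precise than the paper's: the paper asserts that ``the diameter of $\tilde\varphi_{3+i}$ and of $\dot\varphi_{3+i}$ are $2i+2$,'' but through the hub these graphs actually have bounded diameter; the intended comparison, which you spell out correctly, is between the spine length $2k-4=2i+2$ (once the hub is forced to $x$) and $\operatorname{diam}(T)=2i+1$.
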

\begin{proof}
As every edge has multiplicity $i+2$, $\theta_{3+i}\not\wtm G_{3+i}$. Since the vertex-degree of every vertex, but $x$, is at most $i+2$, we also have that $\dot\theta_{3+i}\not\wtm G_{3+i}$. The diameter of $G_{3+i}-x$ is $2i+1$ while the diameter of $\tilde{\varphi}_{3+i}$ and of $\dot\varphi_{3+i}$ are $2i+2$. It follows that that  $\tilde{\varphi}_{3+i}\not\wtm G_{3+i}$ and  $\tilde{\varphi}_{3+i}\not\wtm G_{3+i}$. Finally, observe that $G_{3+i}$ is a series-parallel graph (it has treewidth $2$). As a consequence, $\dot\W_{(3+i)\times(3+i)}\not\wtm G_{3+i}$.
\end{proof}

We observe that $\Delta_{\sf e}(G_{3+i})=2\times (i+1)^{i+1}$. On one hand, as $G_{3+i}$ is biconnected and $\tw(G_{3+i})=2$, by \autoref{th_equivalence}, we have that $\etw(G_{3+i})\ge\sqrt{\Delta_{\sf e}(G_{3+i})}$. On the other hand, by \autoref{obs_tight_bound}, we have that $ \mathbf{p}_{\mathfrak{A}}(G_{3+i})=3+i$. This proves that  the exponential gap  between $\etw$ and $\mathbf{p}_{\mathfrak{A}}$ is tight.
\medskip

\paragraph{Incomparability between edge-treewidth and tree-cut width.}
We now prove that the parameters of edge-treewidth and tree-cut width are parametrically incomparable, i.e., $\tcw\not\sim \etw$.
For this, we may avoid giving the (lengthy) definition of tree-cut width and use instead the main result of Wollan in~\cite{Wollan15thest} who proved that $\tcw\sim\p_{\leq_{\sf im},{\{\cal W}\}}$. Therefore it  remains to show why $\p_{\leq_{\sf im},{\{\cal W}\}}\not\sim\etw$.
Notice that $\p_{\leq_{\sf im},{\{\cal W}\}}(θ_{i})$ is trivially bounded for every $i$, while the graphs in $\mathcal{H}^1=\{\theta_{3+i}\mid i\in\mathbb{N}\}$ have unbounded edge-treewidth, because of \autoref{paplriok}. This implies that $\etw\not\lesssim\p_{\leq_{\sf im},{\{\cal W}\}}$. 
Let $Z_{i}$ be the  graph  obtained after taking 
$|E(W_{i})|$ copies of $θ_{3}$ and then identifying one vertex of each copy to a single vertex.
It is  easy to see that $W_{i}\leq_{\sf im}Z_{i}$, therefore  $\p_{\leq_{\sf im},{\{\cal W}\}}\geq i$.
On the other side, each of the $|E(W_{i})|$ blocks of $Z_{i}$ has maximum edge-degree three, and therefore $Z_{i}$ has bounded edge-treewidth, because of \autoref{th_equivalence}, which implies that $\p_{\leq_{\sf im},{\{\cal W}\}}\not\lesssim\etw$.

\section{On the complexity of computing edge-treewidth}
\label{npcomoplo}

We discuss the computational complexity of the following decision problem:

\pbdefn{Edge-Treewidth}{A graph $G$ and an integer $k$.}{Decide if $\etw(G)\le k$?}

\begin{theorem}[${\sf NP}$-completeness]
The problem {\sc Edge-treewidth} is ${\sf NP}$-complete.
\end{theorem}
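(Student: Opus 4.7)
The plan has two parts: membership in \NP{} and \NP-hardness. Membership is routine: a succinct certificate for $\etw(G)\le k$ is a tree layout $\T=(T,r,\tau)$ of $G$, whose encoding has size polynomial in $|G|$. Given $\T$, I would compute, for each $u\in V(T)$, the set $X_{\T}(u)$ and the quantity $\lambda^{\sf e}(u)=|E_G(X_{\T}(u))|$ in polynomial time, and accept iff $\max_u \lambda^{\sf e}(u)\le k$. Correctness follows from the tree-layout characterisation of $\etw$ already proved in the excerpt.

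For \NP-hardness, I would reduce from \textsc{Cutwidth}, which is classically \NP-complete and remains so on simple graphs. Given an instance $(G,k)$ of \textsc{Cutwidth}, the goal is to build in polynomial time a graph $G^{\star}$ and an integer $k^{\star}$ with $\cw(G)\le k$ iff $\etw(G^{\star})\le k^{\star}$. The guiding idea is to attach to $G$ a \emph{linearising gadget} that forces any optimal tree layout $\T=(T,r,\tau)$ of $G^{\star}$ to have $T$ be a path, so that $\T$ is essentially a linear layout of $G^{\star}$. Once $T$ is a path and $G^{\star}$ is biconnected, the sets $X_{\T}(u)$ coincide with the suffixes $S_i$ of the induced linear layout; hence $\lambda^{\sf e}$ on $G^{\star}$ collapses to $\delta^{\sf e}$ on $G^{\star}$, and $\etw(G^{\star})$ tracks $\cw(G)$ up to a controlled additive constant absorbed in $k^{\star}$. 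A concrete candidate construction: introduce two ``poles'' $s,t$ joined to every vertex of $G$ by a calibrated number of parallel edges, so that $G^{\star}$ becomes biconnected and any branching in $T$ is heavily penalised by the pole-to-$V(G)$ joins.

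The main obstacle is the \emph{soundness} direction of this reduction: proving that every tree layout of $G^{\star}$ of edge-width at most $k^{\star}$ has underlying tree $T$ actually a path, so that it can be read back as a linear layout of $G$ certifying $\cw(G)\le k$. This will require a careful accounting argument bounding from below the edge cost contributed by a branching node of $T$, leveraging the dense pole connections of the gadget: any branching would separate two non-trivial parts of $V(G)$, each still joined by many parallel pole edges, yielding a strict excess over $k^{\star}$. Completeness, in the other direction, is easy, obtained by reading a good linear layout of $G$ as a path-shaped tree layout of $G^{\star}$ and checking the width directly. Should the gadget-based approach prove too delicate, an alternative route is to reduce from restricted instances of \textsc{Treewidth}, exploiting the parametric equivalence $\etw\sim\max\{\tw(B),\Delta_{\sf e}(B)\mid B\in{\sf bc}(G)\}$ from Theorem~\ref{th_equivalence}, after sharpening it into an exact identity on a carefully chosen graph class where $\tw(B)$ dominates and $\Delta_{\sf e}(B)$ is bounded by a constant.
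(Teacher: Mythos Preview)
Your \NP-membership argument is fine and matches the paper's (the paper uses a linear layout rather than a tree layout as certificate, but this is immaterial).

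For hardness, however, what you have written is a plan, not a proof. You explicitly flag the soundness direction as ``the main obstacle'' and then do not resolve it: the gadget is left underspecified (``a calibrated number of parallel edges''), and the crucial lemma---that any optimal tree layout of $G^{\star}$ must have its underlying tree be a path---is asserted to require ``a careful accounting argument'' that you never carry out. Even granting that the tree is a path, you still need to argue that in an optimal path-shaped layout the two poles sit at the extremities (otherwise the induced layout of $V(G)$ need not certify $\cw(G)\le k$), and you need to pin down $k^{\star}$ exactly and check that the pole contribution at every position is the same additive constant. None of this is done. Your fallback via \textsc{Treewidth} is even less developed: Theorem~\ref{th_equivalence} only gives $\sqrt{\p(G)}\le\etw(G)\le\p(G)^4+2\p(G)^2$, and ``sharpening it into an exact identity on a carefully chosen graph class'' is the entire difficulty, not a detail.

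For comparison, the paper reduces from \textsc{Min-Bisection} rather than \textsc{Cutwidth}. Given $(G,k)$ with $|V(G)|=n$, it builds $H$ by adding an independent set $Q$ of $n^2$ new vertices, each made adjacent to every vertex of $G$, and sets $w=n^3/2+k$. The completeness direction uses the layout $V_1\cdot Q\cdot V_2$ induced by a good bisection $(V_1,V_2)$; the soundness direction looks at the first position $i$ where the layout has split $V(G)$ in half, argues that $S_i$ must still meet $Q$ (otherwise the cost already exceeds $w$), hence $C_H(S_i,x_i)=S_i$, and then counts: the $Q$--$V(G)$ edges alone contribute exactly $n^3/2$, so at most $k$ edges of $G$ can cross the bisection. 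This avoids altogether the need to force the tree to be a path.
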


\begin{proof}
Clearly, the problem {\sc Edge-treewidth} belongs to {\sf NP}. Indeed given a layout ${\sf L}$ of $G$, one can check in polynomial time if $\mathbf{p}_{\delta^{\sf ce}}(G,{\sf L})\le k$.
The  ${\sf NP}$-hardness proof is based on a reduction from the {\sc Min-Bisection} problem.

\pbdefn{Min-bisection problem}{A graph $G$ and an integer $k$.}{Is there a bipartition $V_1,V_2$ of $V(G)$ such that $||V_1|-|V_2||\le 1$ and $|E_{G}(V_1)| \le k$
?
}

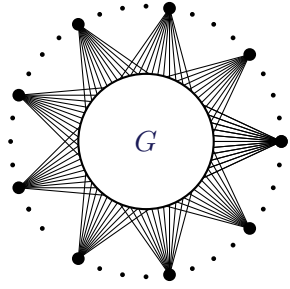
\begin{figure}[h]
\centering
\begin{center}
\begin{tikzpicture}[thick,scale=0.6]
\tikzstyle{sommet}=[circle, draw, fill=black, inner sep=0pt, minimum width=4pt]
\tikzstyle{point}=[circle, draw, fill=black, inner sep=0pt, minimum width=1pt]

\foreach \x in {0,40,...,360}{
\begin{scope}[shift=(\x:3)]
\foreach \y in {-25,-20,...,25}{
\draw[-,thin] (0:0) to (180+\x+\y:3){};
}
\end{scope}
}

\draw[fill=white] (0,0) circle (1.5) ;
\node[] (a) at (0,0) {$G$};

\foreach \x in {0,40,...,360}{
\draw node[sommet] () at (\x:3) {};
}
\foreach \x in {10,20,30,50,60,70,90,100,110,130,140,150,170,180,190,210,210,220,250,260,270,290,300,310,330,340,350}{
\draw node[point] () at (\x:3) {};
}

\end{tikzpicture}
\end{center}
\caption{The graph $H$ is obtained by adding to the graph $G$ an independent set of size $n^2$ and making each of the vertices of the independent set universal to the vertices of $G$.
}
\label{reductionex}
\end{figure}

Let $(G,k)$ be an instance of {\sc Min Bisection} where $G$ is a graph on $n$ vertices. We build an instance $(H,w)$ of {\sc Edge-treewidth} as follows (see \autoref{reductionex}):
\begin{itemize}
\item $V(H)=V(G)\cup Q$, where $Q$ is an independent set of size $n^2$;
\item $E(H)=E(G)\cup \{\{x,y\}\mid x\in V(G), y\in Q\}$, that is we add to $E(G)$ all possible edges between $Q$ and $V(G)$ ;
\item and $w= \frac{n^3}{2} + k$.
\end{itemize}

\begin{claim} \label{cl_NP_f}
If $(G,k)$ is a \textsf{YES}-instance of {\sc Min-Bisection}, then $(H,w)$ is a \textsf{YES}-instance of {\sc Edge-treewidth}.
\end{claim}
\noindent
\textit{Proof of claim.} 
Let $V_1,V_2$ be a bipartition of $V(G)$ such that $||V_1|-|V_2||\le 1$ and $|E_{G}(V_1)| \le k$. Let us consider a layout ${\sf L}$ of $H$ such that for every triple $x,y,z\in V(H)$ such that $x\in V_1$, $y\in Q$, $z\in V_2$, then $x\prec_{{\sf L}} y\prec_{{\sf L}} z$, {(in other words ${\sf L}=\langle V_1\cdot Q\cdot V_2\rangle$)} (here we use $\cdot$ for the sequence concatenation operation). {Observe that ${δ^{\sf ec}_{H,{\sf L}}(\frac{n}{2}+1)}\le\frac{n^3}{2}+k=w$.}
Indeed, $S_{\frac{n}{2}+1}=Q\cup V_2$ induces a connected subgraph of $G$, that is $C_H(S_{\frac{n}{2}+1},x_{\frac{n}{2}+1})=Q\cup V_2$. Moreover, there are $n^2\cdot\frac{n}{2}$ edges between $V_1$ and $Q$ and at most $k$ edges between $V_1$ and $V_2$. Let us prove that for every $i\neq \frac{n}{2}+1$, $δ^{\sf ec}(H,{\sf L},i)\le w$. We consider three cases:
\begin{itemize}
\item Suppose that $i \le \frac{n}{2}$ (i.e. $x_i\in V_1$).  Observe that $C_H(S_i,x_i)=S_i$ and thereby $E_H(C_H(S_i,x_i))$ contains $(i-1)\cdot n^2$ between $V(H)\setminus S_i$ and $Q$, and $c$ edges between $V(H)\setminus S_i$ and $S_i\setminus Q$ with $c\le \frac{n^2}{2}$. All in all, we have that $δ^{\sf ec}(H,{\sf L},i)\le (i-1)\cdot n^2 + \frac{n^2}{2} \le (\frac{n}{2} - 1)\cdot n^2 + \frac{n^2}{2} = \frac{n^3}{2} - \frac{n^2}{2}$.         
\item Suppose that $i > n^2 + \frac{n}{2}$ (i.e. $x_i\in V_2$). A symmetric argument proves that $δ^{\sf ec}(H,{\sf L},i)\le w$.

\item Suppose that $\frac{n}{2}+1<i\le n^2+\frac{n}{2}$ (i.e. $x_i\in Q$). Then $E_H(C_H(S_i,x_i))$ contains at most 
$k$ edges between the vertices of $V_1$ and $V_2$, $(i-\frac{n}{2})\cdot \frac{n}{2}$ edges  between the vertices of $Q$ and $V_2$, and $(n^2+\frac{n}{2} - i)\cdot \frac{n}{2}$ edges between the vertices of $V_1$ and $Q$. In total, we obtain that $δ^{\sf ec}(H,{\sf L},i)\le w$.
\end{itemize}
So we proved that $δ^{\sf ec}(H,{\sf L})\le w$ and thereby $(H,w)$ is a  \textsf{YES}-instance of {\sc Edge-treewidth}. \hfill\qedclaim

\smallskip
\begin{claim} \label{cl_NP_b}
If  $(H,w)$ is a \textsf{YES}-instance of {\sc Edge-treewidth}, then $(G,k)$ is a \textsf{YES}-instance of {\sc Min-Bisection}.
\end{claim}
\noindent
\textit{Proof of claim.} 
Let $L$ be a layout of $H$ such that $δ^{\sf ec}(H,{\sf L})\le w=\frac{n^3}{2} + k$ with $k \le \frac{n^2}{2}$. Let $i$ be the smallest index for which $|(V(H)\setminus S_i)\cap V(G)| = |S_i\cap V(G)|=\frac{n}{2}$. We denote $V_1 = (V(H)\setminus S_i)\cap V(G)$ and $V_2=S_i\cap V(G)$. Let us prove that the bipartition $V_1,V_2$ certifies that $(G,k)$ is a \textsf{YES}-instance of {\sc Min-Bisection}.

We first argue that $S_i\cap Q \ne \emptyset$. For the sake of contradiction, suppose that $S_i\cap Q = \emptyset$. 
Let $j<i$ be the largest integer such that $x_j\in Q$. By the minimality of $i$, we know that $v_{i-1}\notin Q$ and thereby $j<i-1$. We observe that $C_H(S_{j-1},x_{j-1}))=S_{j-1}$ and moreover $Q\setminus \{v_j\} \subseteq V(H)\setminus S_j$ and $|S_{i-1}\cap V(G)|\ge \frac{n}{2}+1$. It follows that $E_H(S_{j-1})$ contains at least $(n^2-1)\cdot (\frac{n}{2}+1)$ edges, that is $δ^{\sf ec}(H,{\sf L},i)>\frac{n^3}{2} + \frac{n^2}{2}$, contradicting the assumption that $δ^{\sf ec}(H,{\sf L})\le w=\frac{n^3}{2} + k$ with $k \le \frac{n^2}{2}$.

As a consequence of $Q\cap S_i\neq\emptyset$, we have that $C_H(S_i,x_i)=S_i$. This implies that at least half of the edges incident to a vertex $q$ of $Q$ belongs $E_H(C_H(S_i,x_i))$. Suppose that $q\in S_i$, then every edge between $q$ and $V_1$ belongs to $E_H(C_H(S_i,x_i))$. Suppose that $q\notin S_i$, then every edge between $q$ and $V_2$ belongs to $E_H(C_H(S_i,x_i))$. It follows that $δ^{\sf ec}(H,{\sf L},i)\ge n^2\cdot \frac{n}{2}=\frac{n^3}{2}$. As by assumption $δ^{\sf ec}(H,{\sf L},i)\le \frac{n^3}{2} + k$, this implies that there are at most
$k$ edges betwwen $V_1$ and $V_2$, certifying that $(G,k)$ is a \textsf{YES}-instance of {\sc Min-Bisection}.
\hfill\qedclaim

\smallskip
 \autoref{cl_NP_f} and \autoref{cl_NP_b} establish the \textsf{NP}-hardness of the {\sc Edge-treewidth} problem.
\end{proof}

%
%
%
%

%
%
%
%


\end{document}